\newtheorem{definition}{Definition}
\newtheorem{theorem}{Theorem}
\newtheorem{lemma}{Lemma}
\title{
Enhanced Equilibria-Solving via Private Information Pre-Branch Structure in Adversarial Team Games
}
\author[1]{Chen Qiu}
\author[2]{Haobo Fu}
\author[3,4]{Kai Li}
\author[1]{Weixin Huang}
\author[1]{Jiajia Zhang}
\author[1]{Xuan Wang}
\affil[1]{%
    School of Computer Science and Technology\\
    Harbin Institute of Technology Shenzhen\\
    Shenzhen\\
    China
}
\affil[2]{%
    Tencent AI Lab\\
    Shenzhen\\
    China
}
\affil[3]{%
    Institute of Automation\\
    Chinese Academy of Sciences\\
    Beijing\\
    China
  }
\affil[4]{
    School of Artificial Intelligence\\
    University of Chinese Academy of Sciences\\
    Beijing\\
    China
}
\begin{document}

\maketitle
\thispagestyle{empty}
\begin{abstract}
In \emph{ex ante} coordinated adversarial team games (ATGs), a team competes against an adversary, and the team members are only allowed to coordinate their strategies before the game starts. The team-maxmin equilibrium with correlation (TMECor) is a suitable solution concept for ATGs. One class of TMECor-solving methods transforms the problem into solving NE in two-player zero-sum games, leveraging well-established tools for the latter. However, existing methods are fundamentally action-based, resulting in poor generalizability and low solving efficiency due to the exponential growth in the size of the transformed game. To address the above issues, we propose an efficient game transformation method based on private information, where all team members are represented by a single coordinator. We designed a structure called \emph{private information pre-branch}, which makes decisions considering all possible private information from teammates. We prove that the size of the game transformed by our method is exponentially reduced compared to the current state-of-the-art. Moreover, we demonstrate equilibria equivalence. Experimentally, our method achieves a significant speedup of 182.89$\times$ to 694.44$\times$ in scenarios where the current state-of-the-art method can work, such as small-scale \emph{Kuhn poker} and \emph{Leduc poker}. Furthermore, our method is applicable to larger games and those with dynamically changing private information, such as \emph{Goofspiel}.
\end{abstract}
\section{Introduction}\label{intro}
Games have long served as critical testbeds for exploring how effectively machines can make sophisticated decisions since the early days of computing \cite{DBLP:journals/ai/BardFCBLSPDMHDM20, campbell2002deep, silver2017mastering}. Finding equilibrium in games has become a significant criterion for evaluating the level of artificial intelligence. In the real world, there have been systems that have achieved superhuman performance, such as \emph{AlphaGo} \cite{silver2016mastering}, \emph{Libratus} \cite{brown2018superhuman}, and \emph{DeepStack} \cite{moravvcik2017deepstack}. While many advances \cite{brown2019deep, Zhou2020, https://doi.org/10.1002/int.21857} have been made in 2-player zero-sum (2p0s) games based on Nash equilibrium (NE) \cite{nash1951non} in imperfect information environments, recent research has focused on more complex adversarial team games (ATGs). In ATGs, multiple players with the same utility function form a team to compete against a common adversary \cite{von1997team}. This results in a game where both cooperation and competition coexist.

In this paper, we focus on the ATGs with \emph{ex ante} coordination. More specifically, team members are allowed to coordinate and agree on a common strategy before the game starts. The solution concept for this setting is the team-maxmin equilibrium with correlation (TMECor), which can be thought of as an NE between the team and the adversary in an ATG \cite{Zhang_2021}. TMECor has the properties of existence and uniqueness in the ATGs with \emph{ex ante} coordination, which avoids the equilibrium selection problem. However, finding a TMECor is proven to be FNP-hard \cite{hansen2008approximability}. 

Currently, methods for computing TMECor can be roughly divided into three categories. The first involves using linear programming (LP). Hybrid column generation was the first algorithm to compute TMECor in ATGs \cite{celli2018computational}. Its core involves team members adopting joint normal-form strategies, while the adversary uses sequence-form strategies. The main weaknesses of such methods are the necessity to solve an integer LP  and the exponential growth in the number of joint actions by the team as the size of the game increases, making them impractical for large-scale games. The second category involves multi-agent deep reinforcement learning algorithms, which can learn coordinated strategies from experience without prior knowledge, such as the method proposed by \citet{DBLP:conf/atal/CacciamaniCC021}. However, it is only applicable to games with symmetric observations for the team, as it requires perfect recall refinement. When team members in the game have private information and public actions, perfect recall refinement is not applicable. The last category of methods combines game tree transformation with techniques commonly used in 2p0s games, establishing a connection between ATGs and 2p0s games. Additionally, the team's strategies are highly explainable, since they are behavioral over the game tree with a direct interpretation. In this paper, we focus on the game tree transformation-based approach.

The state-of-the-art method in game tree transformation-based method is TPICA, proposed by \citet{carminati2022marriage}, which involves the concept of \emph{extensive-form game with visibility}. By introducing the coordinator into an ATG, they convert the task of finding a TMECor into finding an NE in a 2p0s game. However, this method suffers from low solving efficiency and limited types of solvable games. The primary reason is that TPICA relies on action-based transformation. In the transformed game, the coordinator extracts an action from each distinguishable state in the original game to form different recommendations for the team players. Then, a specific action is designated from each recommendation as an available action for the coordinator. To objectively analyze the game size complexity, we consider a setting where the opponent plays first, followed by team members in sequence. We assume that every player has the same number of available actions in every state. Since modifications are made only to team player nodes, we refer to the phase from any specified team player node, which acts first, to all possible next opponent (or terminal) nodes as an episode. For any episode, the size of the transformed game tree is $\mathcal{O}\big((\lvert A\rvert^{\lvert \Omega\rvert})^{\lvert \mathcal{T}\rvert}\big)$, where $\lvert A\rvert$, $\lvert \Omega\rvert$, $\lvert \mathcal{T}\rvert$ denote the number of available actions, the amount of private information, and the number of team players, respectively. Therefore, the size of the transformed game tree grows significantly with the increase in the number of available actions, team players, and private information. In particular, the size growth triggered by a single coordinator node is exponential. Additionally, TPICA cannot be applied to games where players' private information changes.

To address the above issues, we propose a multi-player transformation algorithm (MPTA) based on private information. In our method, to mitigate the exponential growth caused by a single coordinator, we designed a new structure called \emph{private information pre-branch} (PIPB), which consists of coordinator nodes and dummy player nodes. Specifically, PIPB allows dummy players to provide the coordinator with all possible private information from teammates. Since the amount of potentially private information in an ATG is fixed, this structure significantly reduces the size of the transformed game tree compared to the previous state-of-the-art method. This leads to a substantial improvement in the efficiency of equilibrium computation. Furthermore, we demonstrate the equilibrium equivalence before and after the transformation. The private information-based transformation makes our method suitable for games with dynamically changing private information (e.g., \emph{Goofspiel}), expanding the types of solvable games. We show the superior performance of our method through multiple experiments in different game scenarios. The experimental results show that our method computes strategies closer to TMECor compared to the baseline algorithm in the same runtime and significantly reduces runtime within the same number of iterations.

Our contributions can be summarized as follows:
\begin{itemize}
        \item We proposed MPTA based on private information, which significantly improves equilibria-solving efficiency. For any episode, compared to the previous state-of-the-art, the size growth is reduced from $\mathcal{O}\big((\lvert A\rvert^{\lvert \Omega\rvert})^{\lvert \mathcal{T}\rvert}\big)$ to $\mathcal{O}\big((\frac{(\lvert \Omega\rvert-1)!}{(\lvert \Omega\rvert-\lvert \mathcal{T}\rvert)!}\lvert A\rvert)^{\lvert \mathcal{T}\rvert}\big)$, where $\frac{(\lvert \Omega\rvert-1)!}{(\lvert \Omega\rvert-\lvert \mathcal{T}\rvert)!}$ represents the number of ways to arrange $\lvert \mathcal{T}\rvert-1$ elements from a set of $\lvert \Omega\rvert-1$ private information. Additionally, we demonstrated the equilibria equivalence between TMECor in the original game and NE in the transformed game.
        
        \item  The PIPB structure enhances the generalization capability of our method and expands the types of solvable games. It allows our method to be applied to ATGs where players' private information dynamically changes.
        
        \item We conducted 14 experiments on three standard testbeds. The results show a significant improvement in solving efficiency using our method, achieving speedups ranging from 182.89$\times$ to 694.44$\times$ compared to the baseline. We also compared the sizes of the transformed game trees, showing that our method results in much smaller game trees than the baseline. Furthermore, we experimented with larger-scale games and other types of games that were not supported by the baseline.
    \end{itemize}
All proofs in this paper can be found in Appendix A.

\section{Related Work}
Significant research has been focused on finding suitable solutions for ATGs since the concept of team-maxmin equilibrium was introduced by \citet{von1997team}. According to the communication capabilities of the team members, \citet{celli2018computational} defined three different scenarios and corresponding equilibriums for the first time in the extensive-form ATGs.

\citet{basilico2017computing} proposed a modified version of the quasi-polynomial time algorithm and a novel anytime approximation algorithm named \emph{IteratedLP}, whose working principle is to maintain the current solution, providing a policy that can be returned at any time for each team member. \citet{farina2018ex} adopted a novel realization-form representation that maps the problem of finding an optimal ex-ante-coordinated policy for the team to the problem of finding NE. \citet{zhang2020computing} investigated the computational inefficiency resulting from the correlation between team members' strategies and proposed an associated recursive asynchronous multiparametric disaggregation technique to accelerate the computation of TMECor. They accomplished this by reducing the solution space of a mixed integer linear program using an association constraint. Successively, \citet{zhang2020converging, Zhang_2021, farina2021connecting} proposed more efficient variants of the LP. Although \citet{zhang2022team} used a tree decomposition for constraints and described the team's strategy space by a polytope, finding TMECor still requires solving an LP. Some researchers have attempted to use multi-agent deep reinforcement learning to handle ATGs. For instance, \citet{DBLP:conf/atal/CacciamaniCC021} added a game-theoretic centralized training regimen and served as a buffer of past experiences. However, this method can only be applied to games where team members have symmetric observations of each other. It cannot be extended to general games with private information and public actions, such as poker.

The idea of using a coordinator to coordinate team members can be traced back to the seminal work of decentralized stochastic control \cite{nayyar2013decentralized}. The TPICA proposed by \citet{carminati2022marriage} is closely related to our method. Although TPICA has strong theoretical guarantees, their action-based game transformation method results in exponential growth in game size. Compared to TPICA, our method not only offers the same theoretical guarantees but also significantly reduces game size, greatly improving the efficiency of computing TMECor. Moreover, our method expands the types of solvable games, primarily due to the designed PIPB structure.

\section{Preliminaries}\label{preli}
\subsection{Extensive-Form Games and Nash Equilibrium}
An extensive-form game (EFG) is the tree-form model of imperfect-information games with sequential interactions \cite{kuhn1950extensive, brown2017safe, https://doi.org/10.1002/int.22450}. The set $A=\cup_{i \in N} A_i$ denotes all the possible actions, where $A_i$ represents a set of available actions of player $i$. $\lvert A\rvert$ is the number of each player's available actions. $H$ is the set of nodes, and $h\in H$ represents the sequence of all actions from the root to node $h$. $ha\sqsubseteq h^{\prime}$ denotes $h$ reaches $h^{\prime}$ by playing an action $a$. The set $Z \subseteq H$ contains all the terminal nodes. For each decision node $h \in H$, the result returned by the function $\mathcal{A}(h)$ is all available actions at node $h$. $\omega_{i}\in \Omega$ denotes player $i$'s private information (e.g., a card in a poker game), and $\lvert \Omega\rvert$ represents the amount of private information in a game. The player who takes an action at node $h$ is returned by function $P(h)$. The utility function $u_{i}(z)$ is the player $i$'s payoff mapped from a terminal node $z \in Z$ to the real $\mathbb{R}$. An information set (infoset) $I_i$ represents imperfect information for player $i$, which means all nodes $h, h^{\prime}$ are indistinguishable to $i$ in $I_i$. The set of infosets for player $i$ is denoted by $\mathcal{I}_{i}$, and the set of all infosets is represented as $\mathcal{I}=\cup_{i\in N}\mathcal{I}_{i}$.

There are two fundamental paradigms for strategy representation \cite{carminati2022marriage,https://doi.org/10.1002/int.21950}. A behavioral strategy $\sigma_{i}$ of player $i\in N$ is a function that assigns a distribution over all the available actions $\mathcal{A}\left(I_i\right)$ to each $I_i$. Another strategy representation is based on the normal-form plan (a.k.a. pure strategy) $\pi_{i} = \times_{I\in \mathcal{I}_{i}} \mathcal{A}(I)$ which is a tuple specifying on action for each infoset of player $i$. A normal-form strategy is the probability distribution of normal-form plans for a player.  A reduced normal-form strategy (a.k.a. mixed strategy) $\mu_{i}\in \Delta\left(\Pi_i\right)$ is obtained from a normal-form strategy by consolidating plans that are differentiated via actions taken in unreachable nodes. Henceforth, we focus on reduced normal-form strategies in this paper. For any player $i\in N$, $\mu_{i}[z]$ (or $\sigma_{i}[z]$) denotes the probability of reaching terminal nodes $z\in Z$ when $i$ follows strategy $\mu_{i}$ (or $\sigma_{i}$). We represent behavioral strategy profiles with $\boldsymbol{\sigma}$ and normal-form strategy profiles with $\boldsymbol{\mu}$. We define $\boldsymbol{\sigma}_{-i}$ as strategies of players except for $i$. The expected payoff for player $i$ when he plays strategy $\sigma_{i}$ and all the other players follow strategy $\boldsymbol{\sigma}_{-i}$ is denoted by $u_{i}(\sigma_{i},\boldsymbol{\sigma}_{-i})$. Player $i$'s \emph{best response} to strategy $\sigma_{-i}$, denoted as $BR_{i}(\boldsymbol{\sigma}_{-i})$, is a strategy that maximizes player $i$'s payoff against strategy $\boldsymbol{\sigma}_{-i}$. Formally, $u_{i}(BR_{i}(\boldsymbol{\sigma}_{-i}), \boldsymbol{\sigma}_{-i})=max_{\sigma_{i}^{\prime}}u_{i}(\sigma_{i}^{\prime},\boldsymbol{\sigma}_{-i})$. NE is a significant solution concept in 2p0s games in which no player can unilaterally change his strategy to obtain more payoff. An NE $\boldsymbol{\sigma}$ is a strategy profile where all players play the \emph{best response}. Formally, $\boldsymbol{\sigma}$ is an NE if and only if $\forall i\in N, \sigma_{p}\in BR_{i}(\boldsymbol{\sigma}_{-i})$ The \emph{exploitability} $e(\sigma_i)$ of strategy $\sigma_i$ serves as our measurement metric, which measures how much worse $\sigma_i$ does versus $BR(\sigma_i)$ compared to how an equilibrium strategy $\sigma_i^{*}$ does against $BR(\sigma_i^{*})$.

\subsection{Adversarial Team Games and Team-Maxmin Equilibrium with Correlation}

An ATG is an EFG with a set of players $N$, where a team of players competes against an opponent. That is, $N=\mathcal{T}\cup \left\{o\right\}\cup \left\{c\right\}$, where $\mathcal{T}$ represent a team, and $o$ is an opponent. The chance player $c$ simulates exogenous randomness in the game, such as dealing a card from a deck. $\lvert \mathcal{T}\rvert$ denotes the number of team members. The team players share payoffs in ATGs. Formally, $\forall i,j\in \mathcal{T}, u_{i}(z)=u_{j}(z)$. Following the convention of the relevant literature \cite{celli2018computational, Zhang_2021, carminati2022marriage}, we assume \emph{perfect recall}, which means each player remembers information acquired in earlier stages of each infoset.

\begin{algorithm}[tb]
\caption{Multi-Player Transformation Algorithm}
\label{algo:MPTA}
\begin{algorithmic}[1] 
\STATE \textbf{Function} \emph{MPTA}($G$)
\STATE initialize $G^{\prime}$ 
\STATE $N \gets \mathcal{T}\cup \left\{o\right\}\cup \left\{c\right\}$
\STATE $N^{\prime} \gets \{t\} \cup \{o\} \cup \{c\}$
\STATE initialize $h$ with the chance player node
\STATE $h^{\prime} \gets$ \emph{ProcOfTrans}$(G, G^{\prime}, h)$ 
\RETURN{$G^{\prime}$}

\STATE \textbf{Function} \emph{ProcOfTrans}($G, G^{\prime}, h$)
\STATE $\Omega \gets$ Private$(G)$  
\IF{$P(h) = c$}
\STATE $h^{\prime} \gets h$
\STATE $\mathcal{A}^{\prime}(h^{\prime}) \gets \mathcal{A}(h)$
\FOR{$a^{\prime}\in \mathcal{A}^{\prime}(h^{\prime})$}
\STATE \emph{ProcOfTrans}($G, G^{\prime}, ha^{\prime}$)
\ENDFOR
\ELSIF{$P(h)=o$}
\STATE $h^{\prime} \gets h$
\STATE $\mathcal{A}^{\prime}(h^{\prime}) \gets \mathcal{A}(h)$
\ELSIF{$P(h)\in {\mathcal{T}}$}
\STATE add a dummy player node $h_d$ as $h$'s parent node
\STATE $\mathcal{A}^{\prime}(h_d) \gets \Omega\setminus \{\omega_{P(h)}\}$
\FOR{$a^{\prime} \in \mathcal{A}^{\prime}(h_d)$}
\STATE $h^{\prime}\gets h_{d}a^{\prime}$
\FOR{$a\in \mathcal{A}(h)$}
\STATE ProcOfTrans($G,G^{\prime},h^{\prime}a$)
\ENDFOR
\ENDFOR
\ELSE
\STATE $z^{\prime} \gets h$
\STATE $u_{t}(z^{\prime}) \gets \sum_{i\in\mathcal{T}}u_{i}(h)$
\STATE $u_{o}^{\prime}(z^{\prime}) \gets -u_{t}(z^{\prime})$
\ENDIF
\RETURN $h^{\prime}$
\end{algorithmic}
\end{algorithm}

In this work, we focus on the \emph{ex ante} coordinated setting, where TMECor is a significant solution concept. Specifically, a TMECor is an NE that maximizes the team's payoff when team players are allowed to correlate their strategies and agree on tactics before the game begins. A TMECor can be found via a bi-level optimization program formulated over the normal-form strategy profile of team members:
\begin{equation} \label{equ:TMECor}
\begin{split}
\max_{\mu_{\mathcal{T}}}  \min _{\mu_{o}} & \sum_{z \in Z} \mu_{\mathcal{T}}[z]\mu_{o}[z]u_{\mathcal{T}}(z) \\  \text{s.t.} \hspace{.5cm} & \mu_{\mathcal{T}} \in \Delta(\times_{i\in\mathcal{T}}\Pi_i) \\ &  \mu_{o} \in \Delta\left(\Pi_{o}\right)
\end{split}
\end{equation}

\subsection{Team-Public-Information Representation for Extensive-Form Games}
Since this subsection involves some additional concepts, we provide a detailed example in Appendix B for a clearer explanation.
An action $a$ is classified as \emph{observable} or \emph{unobservable} depending on whether it can be seen by player $i$ when played by another player. If the actions $a$ observable by player $i$ at any pair of nodes are the same, these nodes belong to the same infoset. When all infosets are induced, as discussed above, the game is an extensive-form game with visibility (vEFG), where every player has perfect recall. Extending action visibility to a set of player $\mathcal{P}$ (e.g., a team), an action $a$ is called \emph{public} if it is observable by all players in $\mathcal{P}$; \emph{private} if it can be observed by only some player(s) in $\mathcal{P}$; and \emph{hidden} if it is not observable by all players in $\mathcal{P}$ (in this case, $a$ is played by a player not belonging to $\mathcal{P}$). A public infoset for a set of players $\mathcal{P}$ is defined as a public state $S_{\mathcal{P}}\subset H$, where any two nodes of potentially different players in $\mathcal{P}$ belong to the same public state if the actions that are \emph{public} for $\mathcal{P}$ are the same at these nodes. Clearly, if one node of an infoset $I$ belongs to $S_{\mathcal{P}}$, then all the other nodes of $I$ also belong to $S_{\mathcal{P}}$. Let $\mathcal{S}$ denote the set of all public states. $\mathcal{S}_{\mathcal{P}}(h)$ represents the set of all infosets of players in $\mathcal{P}$ that are in the same public state at node $h$.

\begin{figure*}[t]
\centering
\includegraphics[width=1\textwidth]{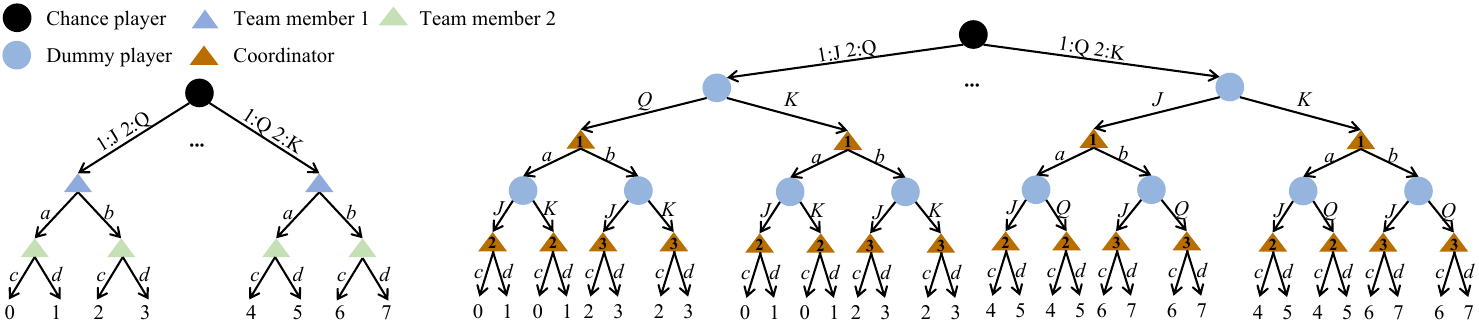}
\caption{Example of game transformation. ``\textbf{\dots}'' indicates omitted branches. The nodes of a player with the same number are in the same infoset. \textbf{Left:} Original ATG omitting the opponent. \textbf{Right:} Result of transforming the game on the left using MPTA.}
\label{fig:TransformedMPTA}
\end{figure*}

Similarly, this paper focuses on public-turn-taking games, where every player knows, at every infoset he plays, the sequence of actions taken by players from the root to that infoset. This indicates that the public states have a specific structure and consist of nodes with histories of the same length for a single player. \citet{carminati2022marriage} proved that any vEFG with perfect recall and timeability has a strategy-equivalent public-turn-taking vEFG. Completely inflated games refer to situations where every team player knows the exact action taken by another team player at any infoset. This can be achieved for a generic vEFG by modifying the visibility of team players' actions, allowing explicit representation of strategy sharing among teammates before the game starts. In the following, we focus on completely inflated vEFGs for the team.

In the game tree transformed by TPICA, every coordinator who represents the team $\mathcal{T}$ plays a \emph{prescription} among all the combinations of possible actions for each information state $I$ belonging to the public team state. In other words, for a public team state $S$, the coordinator issues different recommendations to players for every possible information set associated with $S$. Then, dummy players are used to extract a specific action for each infoset from prescriptions and pass it to the next player. Whenever it is the coordinator's turn to play, all the possible action combinations are listed. Therefore, during the transformation, every added dummy player node corresponds to another node. For any episode, the size of the game tree transformed by TPICA is $\mathcal{O}\big((\lvert A\rvert^{\lvert \Omega\rvert})^{\lvert \mathcal{T}\rvert}\big)$.

\section{Method}\label{method}
In this section, we provide a comprehensive introduction to our method. We begin by designing a new structure called \emph{private information pre-branch} (PIPB), which provides the coordinator with all possible private information from teammates. This effectively reduces the size of the transformed game since the amount of potentially private information is fixed in a game. Then, we leverage this structure to propose a multi-player transformation algorithm (MPTA) based on private information. It effectively transforms an ATG into a strategy-equivalent 2p0s game and expands the types of solvable games. Additionally, we provide a proof of equilibrium equivalence. Finally, we demonstrate that the size of the game tree transformed by our method is smaller than that of previous state-of-the-art algorithms. 

\subsection{The Structure of \emph{Private Information Pre-Branch}}

In the game transformation process, we introduce two types of players: the coordinator, representing the team, and the dummy player, who conveys teammates' possible private information to the coordinator. We consider the general case where players' private information is interdependent. For example, when a team member is dealt $J$ from a deck of three cards $J,Q,K$, he can safely infer that his teammates' private information cannot include $J$. We define the PIPB structure as follows:
\begin{definition}\label{def:PIPB}
    Given a completely inflated vEFG $\mathcal{G}$ that satisfies the public-turn-taking property, let $h_{d}$ denote a dummy player node and $H_{t} \subseteq H$ denote a set of nodes of the same team member. $h_{d}$ and $H_{t}$ form a PIPB iff $\forall h\in H_{t}, \forall \omega\in \Omega\setminus \{\omega_{i}\}: h_{d}\omega\sqsubseteq h$, where $i= P(h)$.
\end{definition}

Intuitively, all team member nodes in a PIPB are connected to a single dummy player node, which serves as their common parent node. The dummy player's available actions represent all possible private information of the teammates. These actions are \emph{unobservable} to all players except those in the next layer. After the transformation, all team players are replaced by the coordinator. While the visibility of the dummy player's actions remains unchanged, the coordinator's public states will change.

\subsection{Multi-Player Transformation Algorithm}
We propose a multi-player transformation algorithm that utilizes the PIPB structure to transform an ATG into a 2p0s game. This method achieves the equivalent transformation by using potentially private information of teammates. The pseudocode is provided in Algorithm~\ref{algo:MPTA}. Our method relies on the tree-form structure, so we first construct a complete game tree for the original ATG and then traverse it in a depth-first pre-order manner. 

To illustrate more clearly, we provide an example of an ATG, ignoring the opponent, transformed via our method, as shown in Figure~\ref{fig:TransformedMPTA}. The converted result by TPICA can be found in Appendix B. In this example, the chance player deals one card to each player from a deck of three cards ($J,Q,K$) as their private information, and team members' actions are \emph{public} (the opponent's actions are also \emph{public} if he is not ignored). The actions of the chance player are \emph{private} to team members as each team player knows only their own card. This means every team member node in the original game tree is a separate infoset. As shown in the left part of Figure~\ref{fig:TransformedMPTA}, the set of all private information $\Omega$ is $\{J,Q,K\}$. When traversing to a team member node, a dummy player node is first added. This dummy player's actions represent all possible private information of the teammates, i.e., $\{Q,K\}$ or $\{J,K\}$. The private information of team members cannot be passed along since they are not publicly observable. The dummy player's actions are based on the private information of the specific team member, so they also cannot be passed along and can only be observed by team members at the next level. Then, we introduce coordinator nodes to make decisions in place of the team member nodes. Terminal nodes are copied according to the sequence of publicly observable actions in the original game. In particular, the payoff of a coordinator is the sum of all team members' payoffs.

Compared to TPICA, the PIPB structure in our method reduces the number of infosets in the transformed game. Coordinator nodes that share the same dummy player node belong to the same information set. Furthermore, the partition of the coordinator nodes' infosets is also based on actions observable to all team members (i.e., actions called \emph{public}). We divide the coordinator's public infosets according to the concept of public state, as shown in the right part of Figure~\ref{fig:TransformedMPTA}, where nodes of a player with the same number belong to the same public infoset.
\begin{theorem} \label{theorem:growth}
    Given an ATG $G$ with visibility that satisfies the public-turn-taking property, and its transformed game $G^{\prime}=\emph{MPTA}(G)$. The size of any episode in $G^{\prime}$ is $\mathcal{O}\big((\frac{(\lvert \Omega\rvert-1)!}{(\lvert \Omega\rvert-\lvert \mathcal{T}\rvert)!}\lvert A\rvert)^{\lvert \mathcal{T}\rvert}\big)$.
\end{theorem}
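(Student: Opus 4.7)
The plan is to count the nodes of $G'$ inside a single episode by bounding the local blow-up that MPTA introduces at each team player node and then multiplying across the $\lvert \mathcal{T}\rvert$ team decisions that make up the episode. Under the public-turn-taking assumption stated in the preliminaries, an episode in $G$ consists of exactly one decision per team member in a fixed order, with at most $\lvert A\rvert$ outgoing actions each, contributing the baseline factor of $\lvert A\rvert^{\lvert \mathcal{T}\rvert}$ that is already present in $G$.

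First I would fix one team player node $h$ with $P(h)=i$ and bound the size of its PIPB in $G'$. By Definition~\ref{def:PIPB} and lines 19--28 of Algorithm~\ref{algo:MPTA}, MPTA inserts a dummy parent $h_d$ whose branches enumerate the consistent private-information assignments that the coordinator must plan for on behalf of $i$'s $\lvert \mathcal{T}\rvert-1$ teammates. Since private information in an ATG is drawn without replacement from $\Omega$ and $\omega_i$ is already committed, the number of valid ordered teammate assignments is the number of arrangements of $\lvert \mathcal{T}\rvert-1$ elements chosen from $\lvert \Omega\rvert-1$, namely $\frac{(\lvert \Omega\rvert-1)!}{(\lvert \Omega\rvert-\lvert \mathcal{T}\rvert)!}$. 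Under each such branch sits the coordinator image of $h$, which inherits the original action set of size at most $\lvert A\rvert$. The local inflation factor at a single team player node is therefore $\frac{(\lvert \Omega\rvert-1)!}{(\lvert \Omega\rvert-\lvert \mathcal{T}\rvert)!}\cdot\lvert A\rvert$.

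Next I would compose across the episode. The $\lvert \mathcal{T}\rvert$ team members act sequentially, and each of their PIPB+coordinator pairs lies inside an independent descendant of the previous coordinator node, so the local inflations multiply. Taking the product over $\lvert \mathcal{T}\rvert$ team decisions yields the claimed bound $\mathcal{O}\left(\left(\frac{(\lvert \Omega\rvert-1)!}{(\lvert \Omega\rvert-\lvert \mathcal{T}\rvert)!}\,\lvert A\rvert\right)^{\lvert \mathcal{T}\rvert}\right)$. The chance and opponent nodes on the episode boundary are copied verbatim by lines 11--18 of Algorithm~\ref{algo:MPTA}, so they contribute no extra multiplicative factor inside an episode and are absorbed into the big-O constant.

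The main obstacle will be justifying the $\frac{(\lvert \Omega\rvert-1)!}{(\lvert \Omega\rvert-\lvert \mathcal{T}\rvert)!}$ bound on the number of distinguishable branches associated with each dummy, in place of the naive $(\lvert \Omega\rvert-1)^{\lvert \mathcal{T}\rvert-1}$ cross-product. This calls for a careful use of the without-replacement structure of $\Omega$ together with the completely-inflated vEFG property, which guarantees that any path through successive PIPBs that would assign the same private information to two different teammates is inconsistent and can be collapsed. Once this combinatorial reduction is settled, the rest of the proof is a direct product of the local inflations.
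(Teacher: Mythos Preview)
Your approach is correct and essentially the same as the paper's: identify the per--team-member inflation factor $\frac{(\lvert \Omega\rvert-1)!}{(\lvert \Omega\rvert-\lvert \mathcal{T}\rvert)!}\cdot\lvert A\rvert$ introduced by each PIPB and compose it across the $\lvert \mathcal{T}\rvert$ sequential team decisions in the episode. The paper only makes the arithmetic more explicit, writing the layer-by-layer node count $\sum_{n=1}^{\lvert \mathcal{T}\rvert}\bigl(\tfrac{(\lvert \Omega\rvert-1)!}{(\lvert \Omega\rvert-\lvert \mathcal{T}\rvert)!}\bigr)^{n}(\lvert A\rvert^{n-1}+\lvert A\rvert^{n})$ and summing it as a finite geometric series; as for your final concern, the paper does not argue the without-replacement reduction either---it simply asserts $\frac{(\lvert \Omega\rvert-1)!}{(\lvert \Omega\rvert-\lvert \mathcal{T}\rvert)!}$ as the dummy branching factor.
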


\subsection{Equilibrium Equivalence}
As an important theoretical guarantee for this work, we prove that the TMECor in the original game can be obtained by solving the NE in the transformed game. We call two strategies realization-equivalent if they induce the same probabilities for reaching nodes for all strategies of other players. In other words, the strategies of NE in the transformed game and the strategies of TMECor in the original game are realization-equivalent. We state Lemma~\ref{lemma:strategy} on strategy equivalence as follows.

\begin{lemma}\label{lemma:strategy}
    Given an ATG $G$ with visibility that satisfies the public-turn-taking property, and the transformed game $G^{\prime}=\emph{MPTA}(G)$. For any joint reduced pure strategy $\pi_{\mathcal{T}}$ in $G$, it can be mapped to a corresponding strategy $\pi_{t}$ in $G^{\prime}$, and vice versa.
\end{lemma}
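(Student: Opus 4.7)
The plan is to construct an explicit bijection between joint reduced pure strategies $\pi_{\mathcal{T}}$ of the team in $G$ and reduced pure strategies $\pi_t$ of the coordinator in $G'$, by first establishing a one-to-one correspondence between team infosets in $G$ and coordinator infosets in $G'$. I would trace through Algorithm~\ref{algo:MPTA} to catalogue how MPTA rewrites team decision points: each team member node $h$ is replaced by a dummy parent $h_d$ whose actions encode the possible private-information assignments of teammates, and whose coordinator children inherit the action set $\mathcal{A}(h)$. Invoking the infoset rules stated in Section~\ref{method} (coordinator children of a common dummy share an infoset, and public-action observability further merges infosets across different dummies), I would argue that the coordinator nodes arising from all $h \in I_i$ for a given team infoset $I_i$ assemble into a single coordinator infoset $\hat{I}$. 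The resulting map $I_i \mapsto \hat{I}$ is then shown to be a bijection: injectivity follows from the public-turn-taking property (distinct team infosets are separated by a distinguishable public-action history, which MPTA carries over verbatim to $G'$), and surjectivity is immediate because MPTA introduces a coordinator node only when replacing a team-player node.

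Using this infoset bijection, the forward map $\phi\colon \pi_{\mathcal{T}} \mapsto \pi_t$ is defined by $\pi_t(\hat{I}) := \pi_i(I_i)$ for each matched pair $(I_i, \hat{I})$, with the backward map $\psi$ defined symmetrically. Since $\mathcal{A}'(\hat{I}) = \mathcal{A}(I_i)$, both maps are well-defined at the action level. It remains to show reachability is preserved --- that is, $\hat{I}$ is reachable under $\pi_t$ iff $I_i$ is reachable under $\pi_{\mathcal{T}}$ --- so that the image of a reduced pure strategy is again a reduced pure strategy. This follows because chance and opponent nodes are untouched by MPTA, and the coordinator actions on the public path mirror those chosen by the team players. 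The two maps are mutual inverses by construction, yielding the claimed correspondence.

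The main obstacle is showing that the dummy layer does not distort the infoset correspondence. Although each dummy multiplies the number of coordinator nodes by a factor counting teammates' private-information assignments, by design every one of these coordinator siblings falls into the same infoset, so the node-level blowup is absorbed at the infoset level. Making this rigorous requires carefully invoking the ``completely inflated'' and ``public-turn-taking'' assumptions to align the public-action partition of coordinator nodes in $G'$ with the infoset partition of team member nodes in $G$, ensuring the bijection stays tight at every stage of the tree.
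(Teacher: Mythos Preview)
Your proposal is correct and follows essentially the same approach as the paper. The paper's proof proceeds by a recursive depth-first traversal of $G$ and $G'$ with a case analysis on node type (chance, opponent, team member, terminal), asserting at each team node $h$ that the $|\Omega|-1$ coordinator nodes it spawns lie in a common infoset $I'$ and setting $\pi_t(I') = \pi_{\mathcal{T}}[I(h)]$; you reorganize this into an explicit global bijection $I_i \mapsto \hat{I}$ between team infosets and coordinator infosets and then transport actions along it. Your version is more careful on two points the paper leaves implicit: you argue injectivity/surjectivity of the infoset map directly, and you verify that reachability is preserved so that reduced pure strategies map to reduced pure strategies.
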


By Lemma~\ref{lemma:strategy}, we state Theorem~\ref{theorem:payoff} on payoff equivalence as follows.

\begin{theorem} \label{theorem:payoff}
    Given a public-turn-taking ATG $G$ with visibility, and its transformed game $G^{\prime}=\emph{MPTA}(G)$, they have equivalent payoffs.
\end{theorem}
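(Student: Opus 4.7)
The plan is to combine the strategy bijection from Lemma~\ref{lemma:strategy} with a reach-probability identity to show that corresponding strategy profiles yield identical expected payoffs in $G$ and $G'$, from which minmax value equality and thus the theorem follow. First, I would fix the bijection $\Phi : \pi_{\mathcal{T}} \mapsto \pi_t$ guaranteed by Lemma~\ref{lemma:strategy} between joint reduced pure team strategies in $G$ and reduced pure coordinator strategies in $G'$; since Algorithm~\ref{algo:MPTA} (lines~17--19) leaves the opponent's decision nodes and action sets untouched, there is a trivial identity bijection on opponent pure strategies. It then suffices to verify the pointwise equality $u_t(\Phi(\pi_{\mathcal{T}}), \pi_o) = u_{\mathcal{T}}(\pi_{\mathcal{T}}, \pi_o)$ on pure profiles, and the opponent side follows automatically from the zero-sum construction $u_o' = -u_t$ on line~33.

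Next, I would relate terminals: by construction of \emph{ProcOfTrans}, each $z' \in Z'$ projects to a unique $z \in Z$ by erasing the inserted dummy player actions along its history, and line~32 gives $u_t(z') = \sum_{i \in \mathcal{T}} u_i(z) = u_{\mathcal{T}}(z)$, independent of the dummy choices along the path. I would then prove, by induction on the recursion depth of \emph{ProcOfTrans}, that
\[
\sum_{z' \mapsto z} \mu_t[z']\cdot \mu_o[z'] \;=\; \mu_{\mathcal{T}}[z]\cdot \mu_o[z]
\]
for every $z \in Z$ under corresponding profiles. The base cases (chance and opponent nodes) are immediate because Algorithm~\ref{algo:MPTA} copies them verbatim. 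At a PIPB (Definition~\ref{def:PIPB}), the dummy player introduces one coordinator subtree per potential teammate private-information tuple $\omega \in \Omega \setminus \{\omega_{P(h)}\}$; along the branch consistent with the actual private info dealt by chance, the coordinator's refined infoset uniquely identifies the teammates' configuration, so $\Phi(\pi_{\mathcal{T}})$ prescribes exactly the action that $\pi_{\mathcal{T}}$ assigns to the acting team member in $G$. Dummy branches that are inconsistent with the realized chance outcomes either carry zero reach probability under the fixed dummy strategy or are absorbed by the coordinator's infoset constraints, so the sum over dummy actions collapses to the single consistent branch and the induction step closes.

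Combining the terminal map with this reach-probability identity immediately yields $u_t(\Phi(\pi_{\mathcal{T}}), \pi_o) = u_{\mathcal{T}}(\pi_{\mathcal{T}}, \pi_o)$, and linearity extends the equality to behavioral and mixed strategies, so $\max_{\mu_{\mathcal{T}}}\min_{\mu_o} u_{\mathcal{T}} = \max_{\mu_t}\min_{\mu_o} u_t$ as required. The main obstacle will be the inductive step at a PIPB: one must track how the coordinator's infoset partition, refined jointly by the public-turn-taking property and the newly inserted dummy action, interacts with interdependent private-information configurations (e.g., cards dealt without replacement), and verify that the dummy-branch aggregation collapses onto the single subtree matching the realized chance outcomes. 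Making this bookkeeping precise---in particular, checking that $\Phi$ is well-defined across the refined infosets and that the dummy strategy induced by the construction produces the claimed cancellation---is where the bulk of the technical work lies.
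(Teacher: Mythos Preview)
Your proposal is correct and follows essentially the same approach as the paper: invoke the strategy correspondence of Lemma~\ref{lemma:strategy} and argue that corresponding pure profiles yield identical payoffs, then extend by linearity. The paper's own proof is a two-sentence sketch that simply asserts payoff equivalence ``directly relies on Lemma~\ref{lemma:strategy}'' without spelling out the reach-probability bookkeeping; your induction on the recursion depth of \emph{ProcOfTrans} and your explicit handling of the PIPB dummy branches make precise exactly the step the paper leaves implicit.
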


\setlength{\tabcolsep}{1mm}
\begin{table*}
  \centering
  \caption{Experimental results of the running time of TPICA and our method on several different types and sizes of game instances. Blank cells indicate that the experiment cannot be conducted.}
    \begin{tabular}{lrrrrrrrrrr}
    \toprule
    \multicolumn{1}{c}{\multirow{2}[2]{*}{\makecell[c]{Game \\ instances}}} & \multicolumn{3}{c}{Total nodes} & \multicolumn{2}{c}{Team nodes} & \multicolumn{2}{c}{Adversary nodes} & \multicolumn{2}{c}{Runtime} & \multicolumn{1}{c}{\multirow{2}[2]{*}{Improvement}} \\
         \multicolumn{1}{c|}{} & \multicolumn{1}{c}{Original} & \multicolumn{1}{c}{TPICA} & \multicolumn{1}{c|}{\textbf{MPTA}} & \multicolumn{1}{c}{TPICA} & \multicolumn{1}{c|}{\textbf{MPTA}} & \multicolumn{1}{c}{TPICA} & \multicolumn{1}{c|}{\textbf{MPTA}} & \multicolumn{1}{c}{TPICA} & \multicolumn{1}{c}{\textbf{MPTA}} &  \\
    \midrule
    12\textbf{K}3  & 151   & 5,395 & 583   & 300   & 144   & 294   & 72    & 139s  & \textbf{0.76s} & \textbf{182.89$\times$} \\
    12\textbf{K}4  & 601   & 1,337,051 & 3,097 & 3,888 & 768   & 4,632 & 384   & 1560s & \textbf{9.26s} & \textbf{168.47$\times$} \\
    12\textbf{K}6  & 3,001 & 34,191,721 & 23,161 & 261,360 & 5,760 & 368,760 & 2,880 & $>$27h  & \textbf{144s} & \textbf{694.44$\times$} \\
    13\textbf{K}6  & 23,401 &       & 271,441 &       & 75,240 &       & 22,680 &       & \textbf{562s} &  \\
    13\textbf{K}8  & 109,201 &       & 1,713,601 &       & 475,440 &       & 142,800 &       & \textbf{5,093s} &  \\
    14\textbf{K}6  & 115,921 &       & 1,796,401 &       & 528,480 &       & 105,120 &       & \textbf{3,051s} &  \\
    12\textbf{L}33 & 13,183 & 10,777,963 & 57,799 & 614,172 & 14,664 & 475,566 & 6,864 & 56,156s & \textbf{240s} & \textbf{233.98$\times$} \\
    12\textbf{L}43 & 42,589 &       & 251,749 &       & 64,008 &       & 29,736 &       & \textbf{3,006s} &  \\
    12\textbf{L}63 & 218,011 &       & 1,954,351 &       & 497,940 &       & 229,620 &       & \textbf{9,024s} &  \\
    13\textbf{L}33 & 161,491 &       & 948,151 &       & 262,500 &       & 80,220 &       & \textbf{4,014s} &  \\
    13\textbf{L}43 & 738,241 &       & 5,994,241 &       & 1,661,760 &       & 504,000 &       & \textbf{137,817s} &  \\
    14\textbf{L}33 & 1,673,311 &       & 12,226,231 &       & 3,535,320 &       & 809,880 &       & \textbf{143,475s} &  \\
    12\textbf{G}   & 2,509 &       & 92,581 &       & 29,700 &       & 1,464 &       & \textbf{241s} &  \\
    13\textbf{G}   & 15,307 &       & 3,352,669 &       & 1,107,162 &       & 13,128 &       & \textbf{50107s} &  \\
    \bottomrule
    \end{tabular}
  \label{tab_res}
\end{table*}

For brevity, we use the notation $\mapsto$ to denote the strategy mapping relationship. If we transform each pure plan and sum their probability masses, we obtain the corresponding mixed strategy. Formally, for any $\mu_{\mathcal{T}}$, the corresponding mixed strategy is $\sum_{\pi_{\mathcal{T}}:\pi_{t} \mapsto \pi_{\mathcal{T}}}\mu_{\mathcal{T}}(\pi_{\mathcal{T}})$. Therefore, Lemma~\ref{lemma:strategy} also applies to mixed strategies. Specifically, any $\mu_{\mathcal{T}}$ in $G$ can be mapped to $\mu_{t}$ in $G^{\prime}$, and vice versa.

\begin{theorem} \label{theorem:equilibrium}
    Given an ATG $G$ with visibility that satisfies the public-turn-taking property, and its transformed game $G^{\prime}=\emph{MPTA}(G)$. If $\mu_{t}^{*}$ is an NE in $G^{\prime}$, then strategy $\mu_{T}^{*}: \mu_{t}^{*}\mapsto \mu_{\mathcal{T}}^{*}$ is a TMECor in $G$.
\end{theorem}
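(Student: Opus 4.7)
The plan is to leverage Lemma~\ref{lemma:strategy} and Theorem~\ref{theorem:payoff} to transport the equilibrium condition from $G'$ back to $G$, exploiting the fact that $G'$ is a 2p0s game, so its NE coincides with the maximin of the bi-level program~(\ref{equ:TMECor}). First I would observe that MPTA leaves opponent (and chance) nodes untouched (Algorithm~\ref{algo:MPTA}, lines 10--18), so the opponent's reduced mixed strategy space in $G'$ is literally the same as in $G$; I can therefore use the same symbol $\mu_o$ in both games without ambiguity. Combined with Lemma~\ref{lemma:strategy} (extended to mixed strategies via the remark that follows its statement), this yields a bijection $\Phi\colon \mu_t \leftrightarrow \mu_{\mathcal{T}}$ on the coordinator/team side, and Theorem~\ref{theorem:payoff} makes $\Phi$ payoff-preserving: for every $(\mu_t, \mu_o)$ with $\mu_t \mapsto \mu_{\mathcal{T}}$,
\begin{equation*}
u_t(\mu_t, \mu_o) \;=\; \sum_{z\in Z}\mu_{\mathcal{T}}[z]\,\mu_o[z]\,u_{\mathcal{T}}(z).
\end{equation*}

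Next I would use these three ingredients to identify the two optimization problems. Let $(\mu_t^{*}, \mu_o^{*})$ be the NE in $G'$ whose first component is the given $\mu_t^{*}$. Because $G'$ is zero-sum between $t$ and $o$, the standard NE characterization (via von Neumann's minimax theorem) yields $u_t(\mu_t^{*}, \mu_o^{*}) = \max_{\mu_t}\min_{\mu_o} u_t(\mu_t, \mu_o)$, with $\mu_t^{*}$ attaining the outer maximum. Substituting the payoff equivalence along the bijection, and using that $\Phi$ is surjective onto $\Delta(\times_{i\in\mathcal{T}}\Pi_i)$,
\begin{align*}
\max_{\mu_t}\,\min_{\mu_o}\, u_t(\mu_t, \mu_o)
\;=\; \max_{\mu_{\mathcal{T}}\in \Delta(\times_{i\in\mathcal{T}}\Pi_i)}\,\min_{\mu_o\in\Delta(\Pi_o)}\, \sum_{z\in Z}\mu_{\mathcal{T}}[z]\,\mu_o[z]\,u_{\mathcal{T}}(z),
\end{align*}
which is exactly the TMECor program~(\ref{equ:TMECor}). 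Hence $\mu_{\mathcal{T}}^{*} = \Phi(\mu_t^{*})$ attains the outer maximum of~(\ref{equ:TMECor}) and, together with its induced best-response opponent, constitutes a TMECor in $G$.

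The bulk of the technical work is already absorbed in Lemma~\ref{lemma:strategy} and Theorem~\ref{theorem:payoff}; the present theorem is essentially the glue that links them to the definition of TMECor. The main obstacle I anticipate is verifying two bookkeeping facts rigorously: that $\Phi$ descends to a well-defined bijection at the \emph{reduced}-normal-form level, i.e., pure plans of $G$ and $G'$ which differ only at nodes made unreachable by the PIPB dummy branches are identified consistently on both sides, so that probability mass aggregates the same way; and that no spurious coordinator strategies appear in $G'$ without a preimage in $G$, which would otherwise widen the outer $\max$ and break the equivalence. Handling these details reduces to inspecting how PIPB distributes a joint team plan across the dummy branches and invoking perfect recall so that the opponent's minimax value is unaffected by the reshuffling; once this is done, the chain of equalities above goes through and the result follows.
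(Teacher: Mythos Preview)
Your proposal is correct and follows essentially the same approach as the paper: both rely on Lemma~\ref{lemma:strategy} (extended to mixed strategies) and Theorem~\ref{theorem:payoff} to identify the opponent's strategy space and the team/coordinator payoffs across $G$ and $G'$, then argue that the maximin in $G'$ coincides with the TMECor program~(\ref{equ:TMECor}). The only cosmetic difference is that the paper phrases the final step as a contradiction (assume some $\mu_{\mathcal{T}}'$ beats $\mu_{\mathcal{T}}^{*}$, map it via Lemma~\ref{lemma:strategy} to a $\mu_t'$ that would beat $\mu_t^{*}$), whereas you build the direct chain of equalities; your extra care about surjectivity of $\Phi$ and consistency at the reduced-normal-form level is more explicit than the paper, which simply takes Lemma~\ref{lemma:strategy} at face value.
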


\section{Experimental Evaluation}\label{experiment}

\subsection{Experimental Setting}
\label{Exp_set}

We conduct experiments on the standard testbed for ATGs. More specifically, we use three different multi-player parametric versions of games: \emph{Kuhn poker} \cite{farina2018ex, kuhn1950simplified}, \emph{Leduc poker} \cite{farina2018ex, DBLP:conf/uai/SoutheyBLPBBR05} and \emph{Goofspiel} \cite{farina2021connecting, ross1971goofspiel}, as they are commonly used for experimental evaluation \cite{farina2021connecting}. Specifically, unlike the other two scenarios, \emph{Goofspiel} involves changes in the amount of players' private information during the game. The number of players in these games is parameterized for flexibility. The specific rules for these games are provided in Appendix C.

We denote the number of opponents by $m$ and the number of team members by $n$. For brevity, we use the following symbols to describe the parameters of the experiments:
\begin{itemize}
    \item \textbf{$mn$K$r$}: \emph{Kuhn poker} with $r$ ranks;
    \item \textbf{$mn$L$rc$}: \emph{Leduc poker} with $r$ ranks and $c$ indistinguishable suits. The default maximum number of bets allowed per betting round is $1$;
    \item \textbf{$mn$G}: \emph{Goofspiel} with three ranks.
\end{itemize}

In this work, we adopt the state-of-the-art method that can be combined with 2-player game algorithms as the baseline. The baseline is TPICA proposed by \citet{carminati2022marriage}. Since TPICA is not open-source, we reproduce it based on the information provided in their paper. To ensure a fair comparison, we use the \emph{counterfactual regret minimization plus}, an effective algorithm for finding NE in 2p0s games, in both our method and the baseline. All experiments are run on a machine with 18-core 2.7GHz CPU and 250GB memory.

\subsection{Experimental Results}
\label{Exp_res}
In Table \ref{tab_res}, we use the number of total nodes to represent the size of different games, where the column `Original' represents the scale of the original ATG, and the other columns represent the game size after being transformed by TPICA and our method, respectively. Furthermore, we also provide specific data for the coordinator and adversary nodes. The running time is provided in the column `Runtime'. The TPICA and MPTA algorithms are run for comparison under the same machine configuration and identical experimental conditions. In the four cases of 12\textbf{K}3, 12\textbf{K}4, 12\textbf{K}6 and 12\textbf{L}33 where both MPTA and TPICA are applicable, the total time required by our approach to compute a TMECor is $0.76s$, $9.26s$, $144s$, and $240s$ respectively, which are $182.89 \times$, $168.47 \times$, $694.44 \times$ and $233.98 \times$ faster than TPICA. These results show that our method effectively reduces the game size, significantly improving solving speed by several orders of magnitude. It is worth noting that, in certain large-scale game scenarios where TPICA is unable to transform the original game tree, MPTA can still effectively support the computation of TMECor. In particular, 14\textbf{K}6 and 14\textbf{L}33 are 5-player cases that have never been used as experiments by previous algorithms due to their sheer size. We also observe by the node data in Table \ref{tab_res} that the reason for the speed-up is mainly due to the PIPB structure, which significantly reduces the number of adversary nodes and temporary chance nodes. Furthermore, we conduct detailed analyses on the solving efficiency and execution efficiency of the algorithms.

\begin{figure}
\centering
\subfigure[12\textbf{K}3]{
\label{res1_1}
\includegraphics[width=0.22\textwidth]{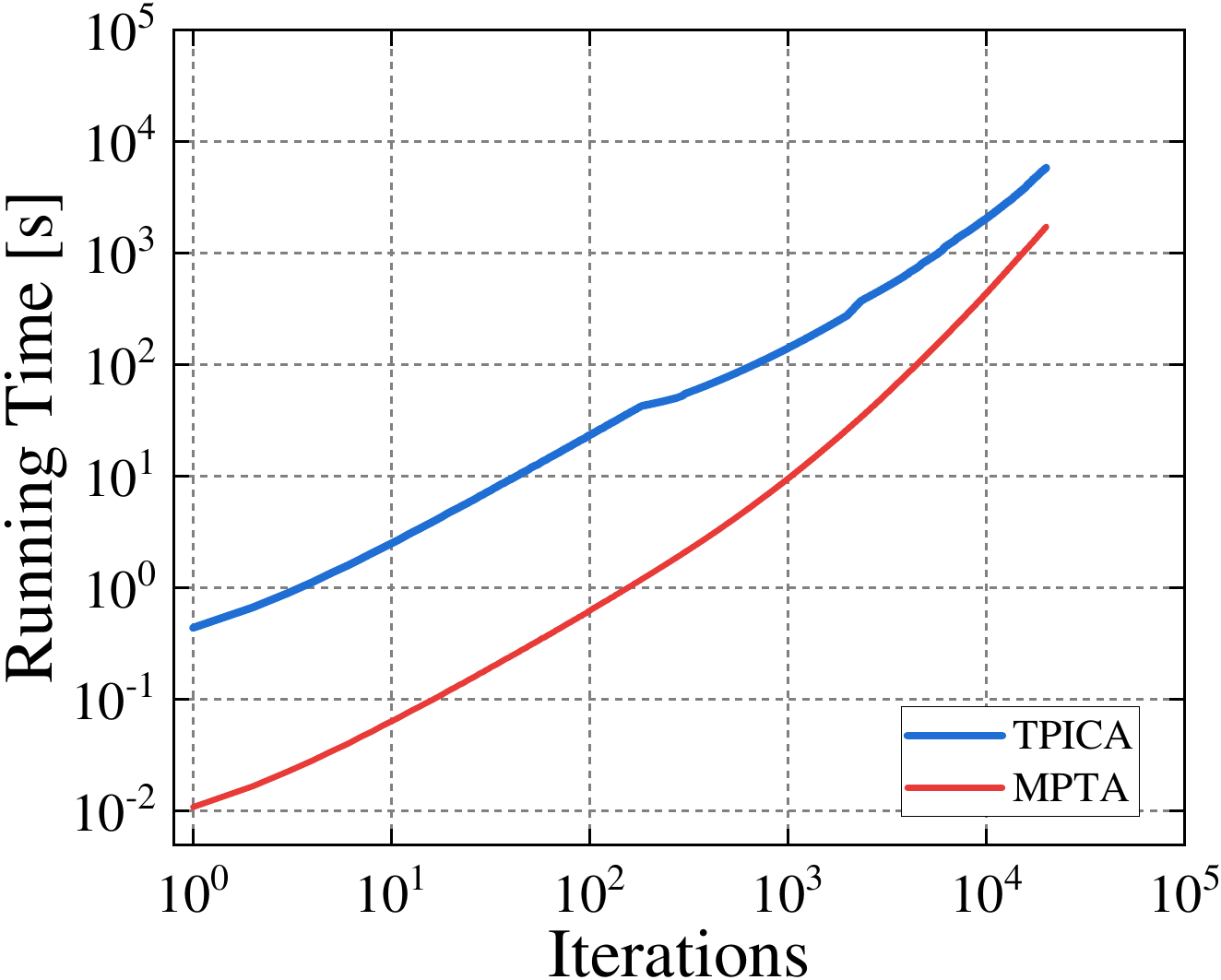}}
\subfigure[12\textbf{K}4]{
\label{res1_2}
\includegraphics[width=0.22\textwidth]{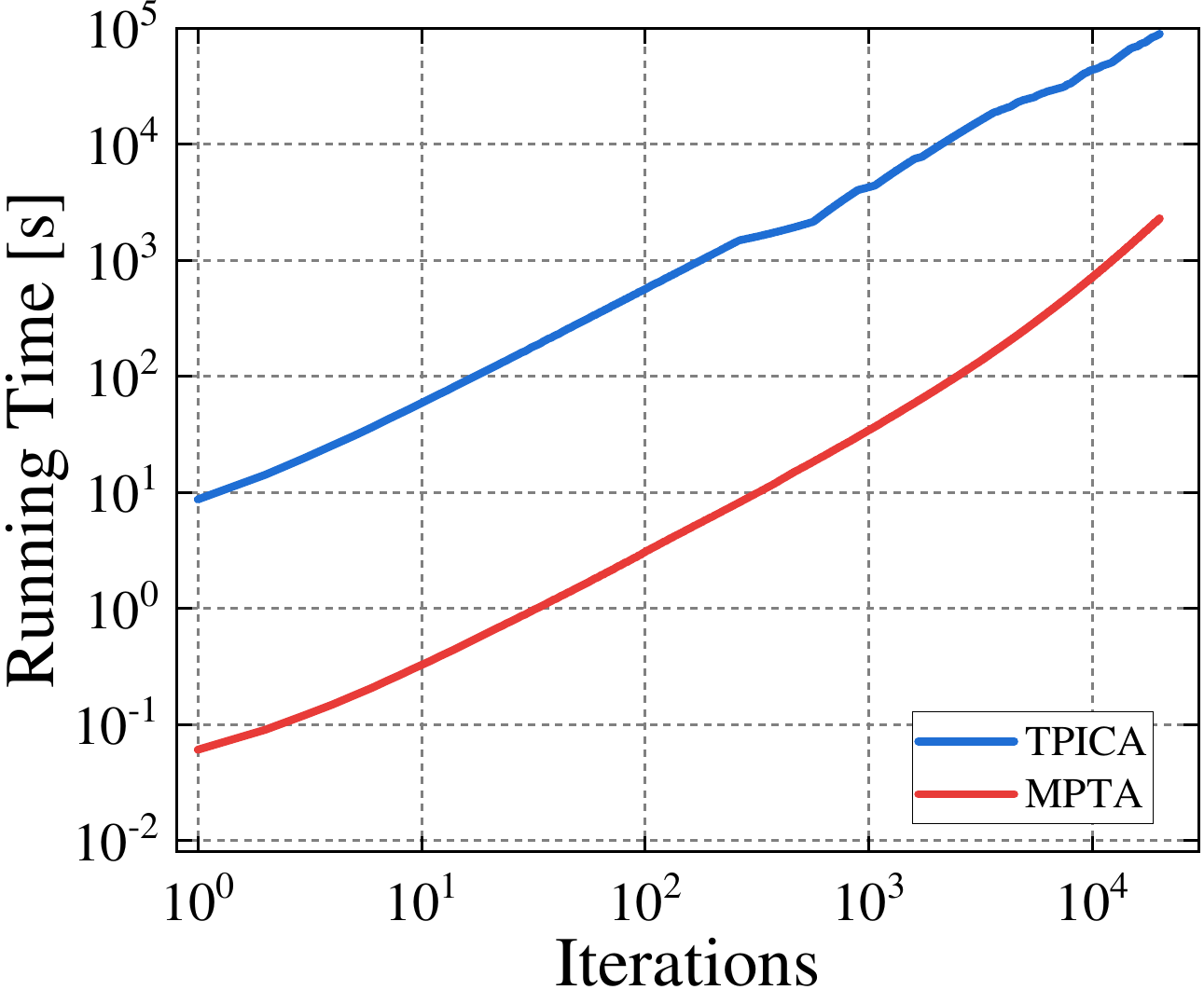}}
\subfigure[12\textbf{K}6]{
\label{res1_3}
\includegraphics[width=0.22\textwidth]{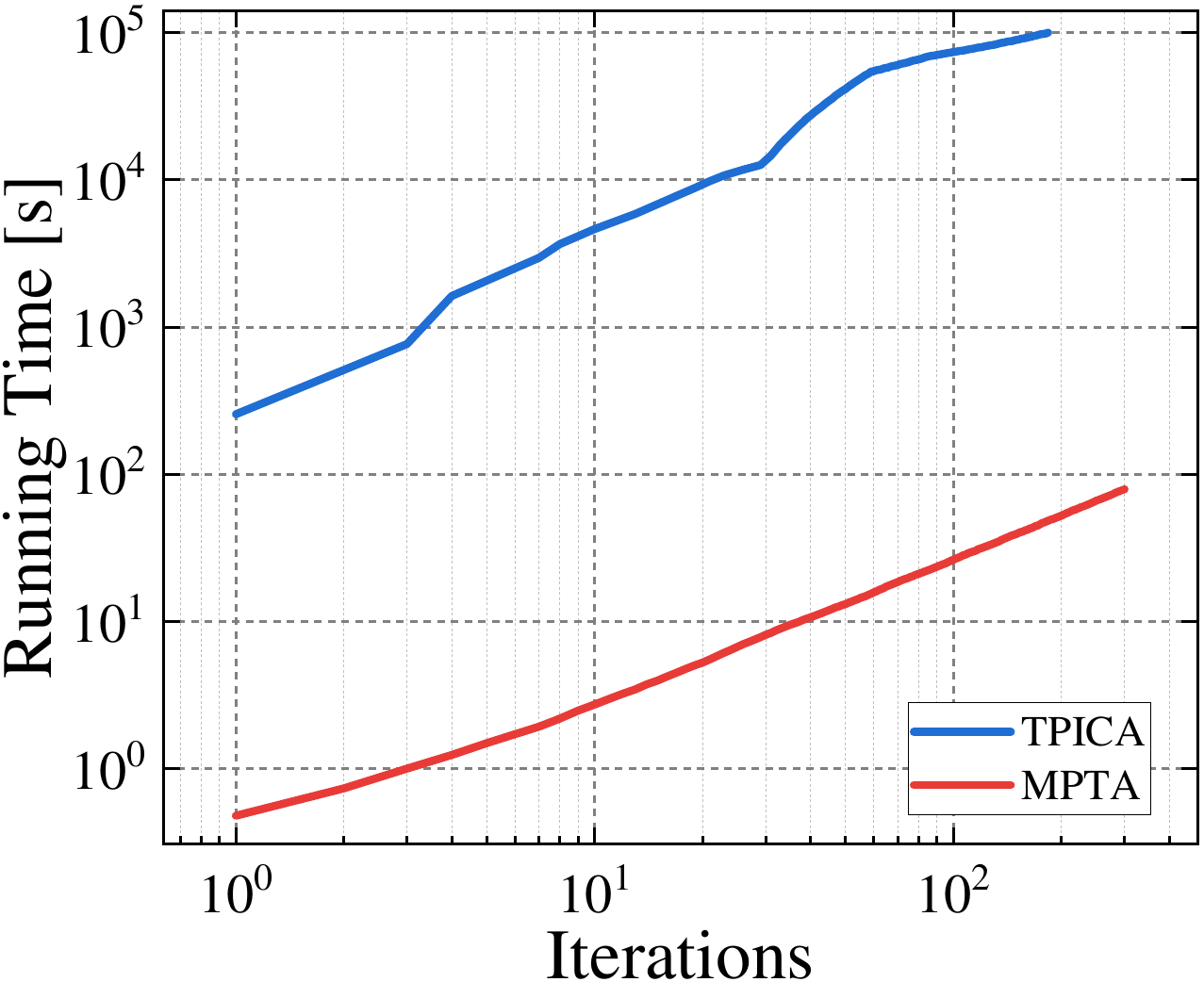}}
\subfigure[12\textbf{L}33]{
\label{res1_4}
\includegraphics[width=0.22\textwidth]{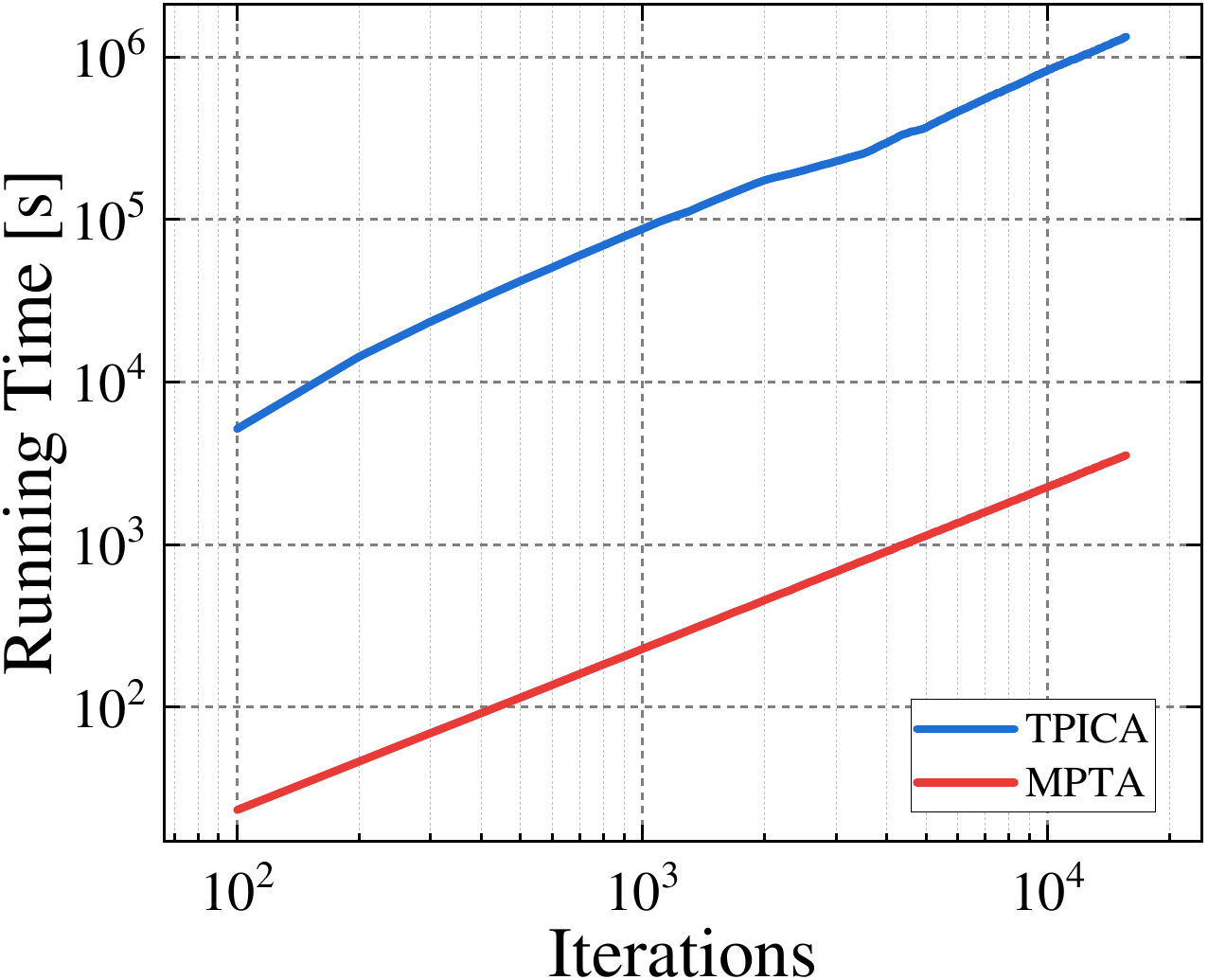}}
\caption{Comparison of runtime within the same number of iterations. All schemes except for 12\textbf{K}6 have been iterated for 20,000 rounds, as the TPICA is too time-consuming to run more rounds.}  
\label{fig:it_time}
\end{figure}

\begin{figure*}
\centering
\subfigure[12\textbf{K}3]{
\label{res2_1}
\includegraphics[width=0.25\textwidth]{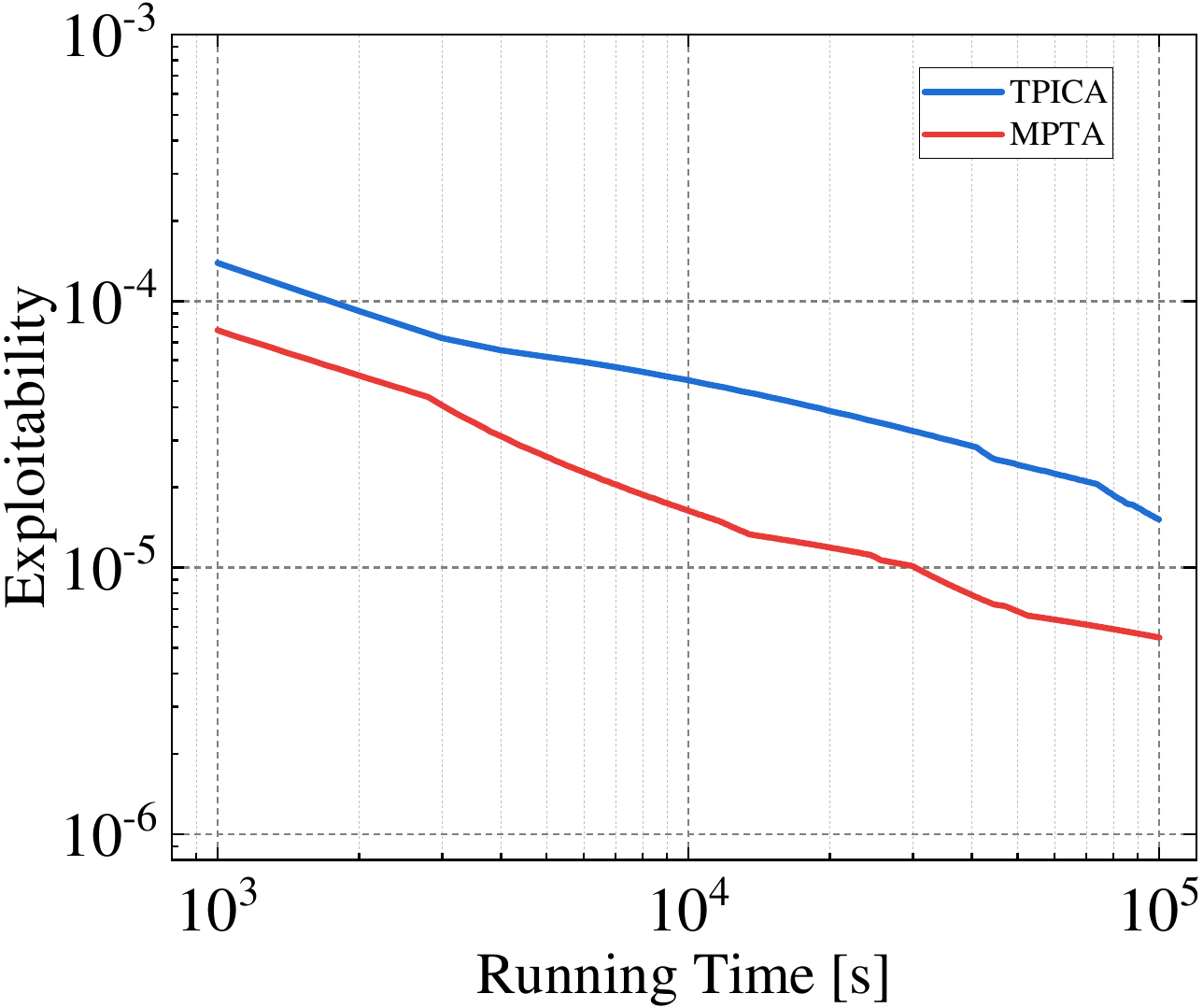}}
\subfigure[12\textbf{L}33]{
\label{res2_2}
\includegraphics[width=0.25\textwidth]{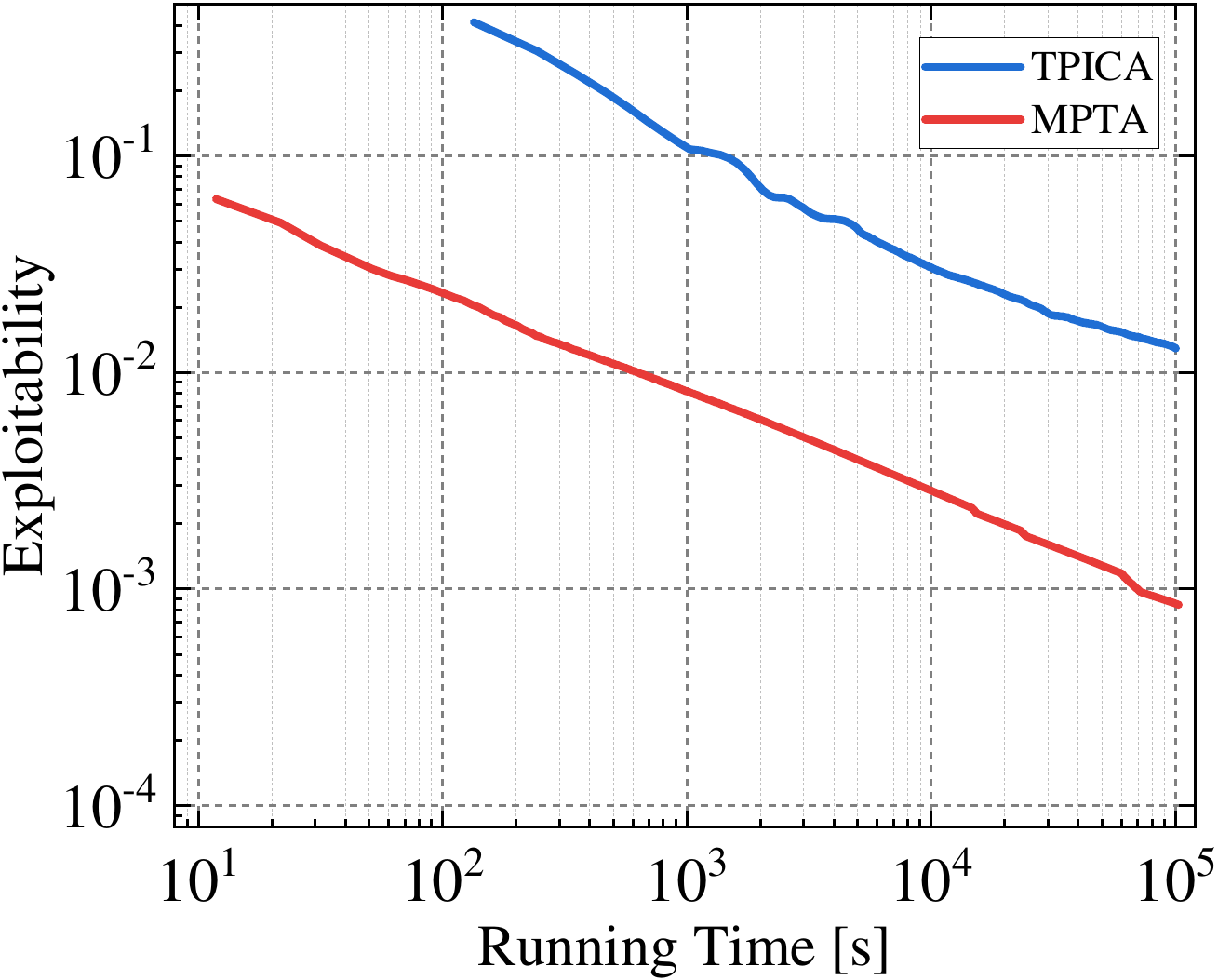}}
\subfigure[12\textbf{K}6]{
\label{res2_3}
\includegraphics[width=0.25\textwidth]{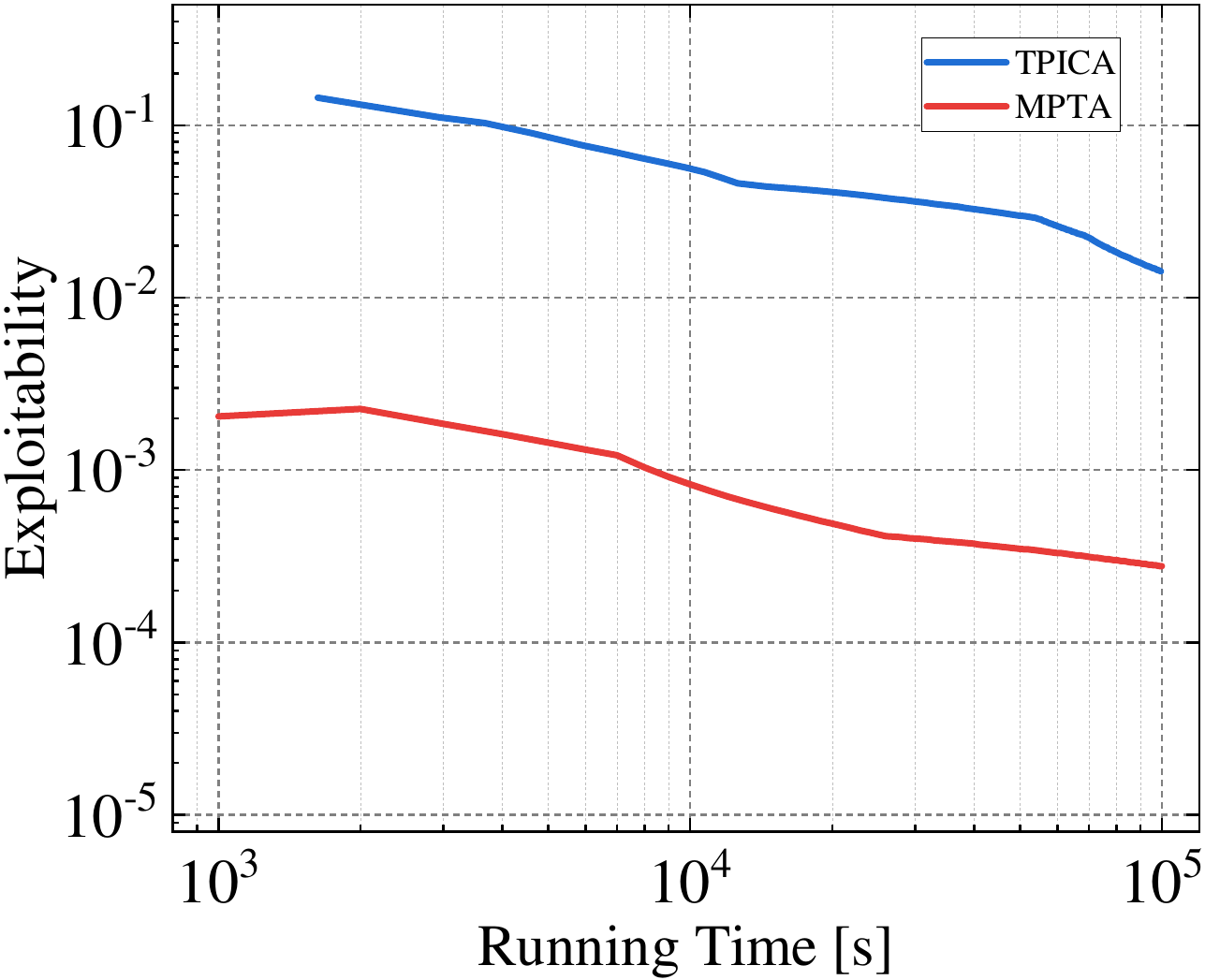}}
\subfigure[14\textbf{K}6]{
\label{res2_4}
\includegraphics[width=0.25\textwidth]{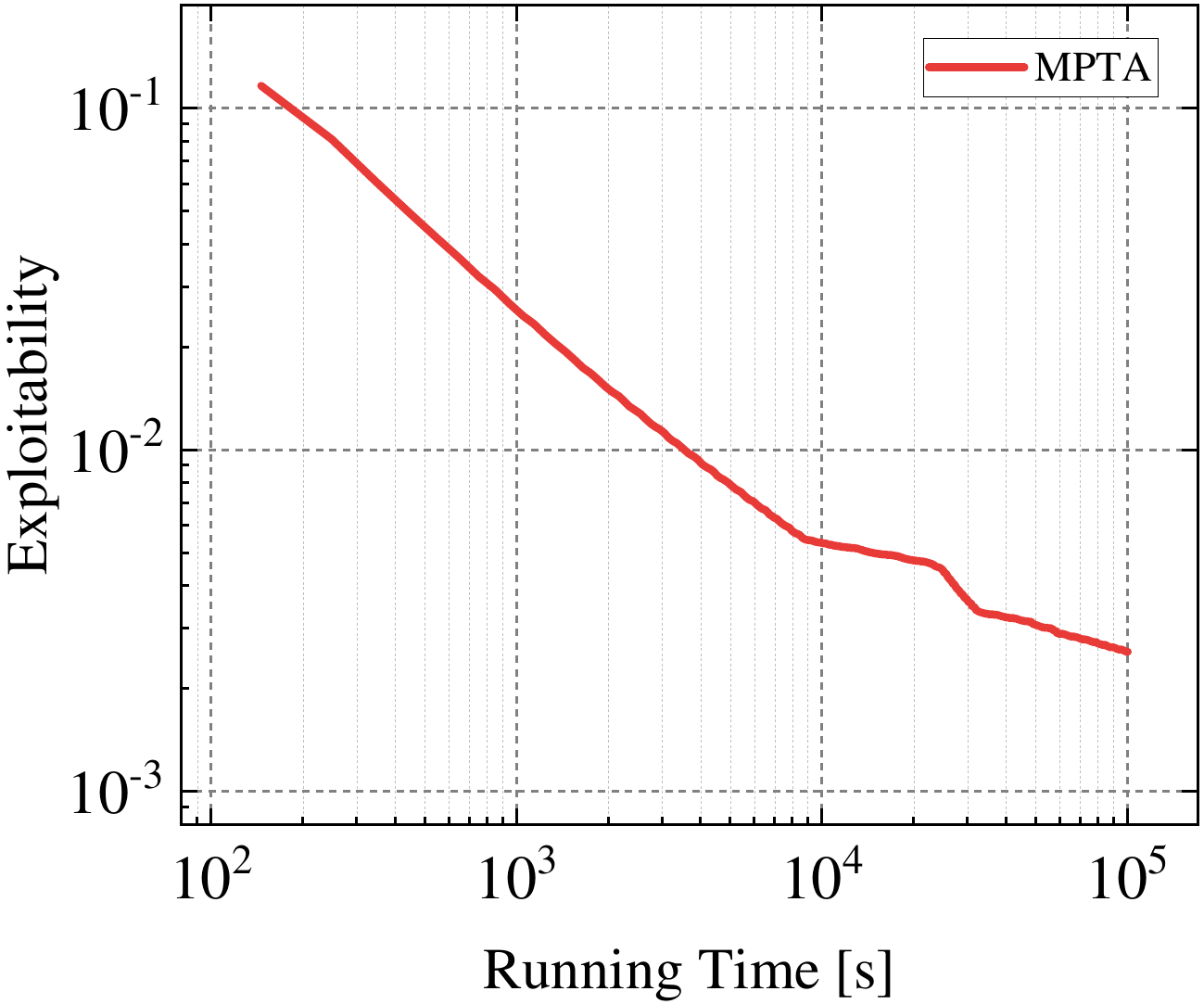}}
\subfigure[14\textbf{L}33]{
\label{res2_5}
\includegraphics[width=0.25\textwidth]{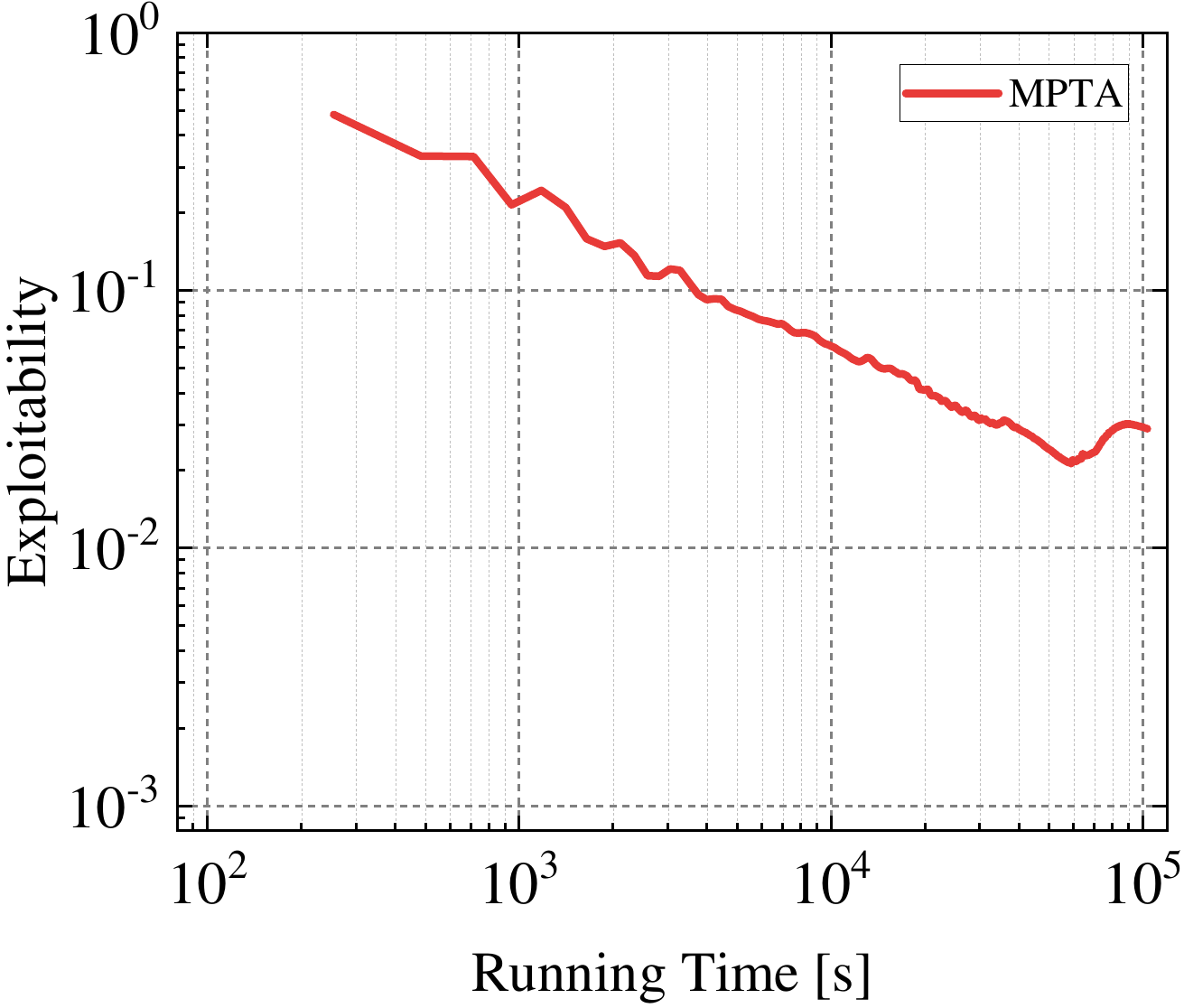}}
\subfigure[12\textbf{G} and 13\textbf{G}]{
\label{res2_6}
\includegraphics[width=0.25\textwidth]{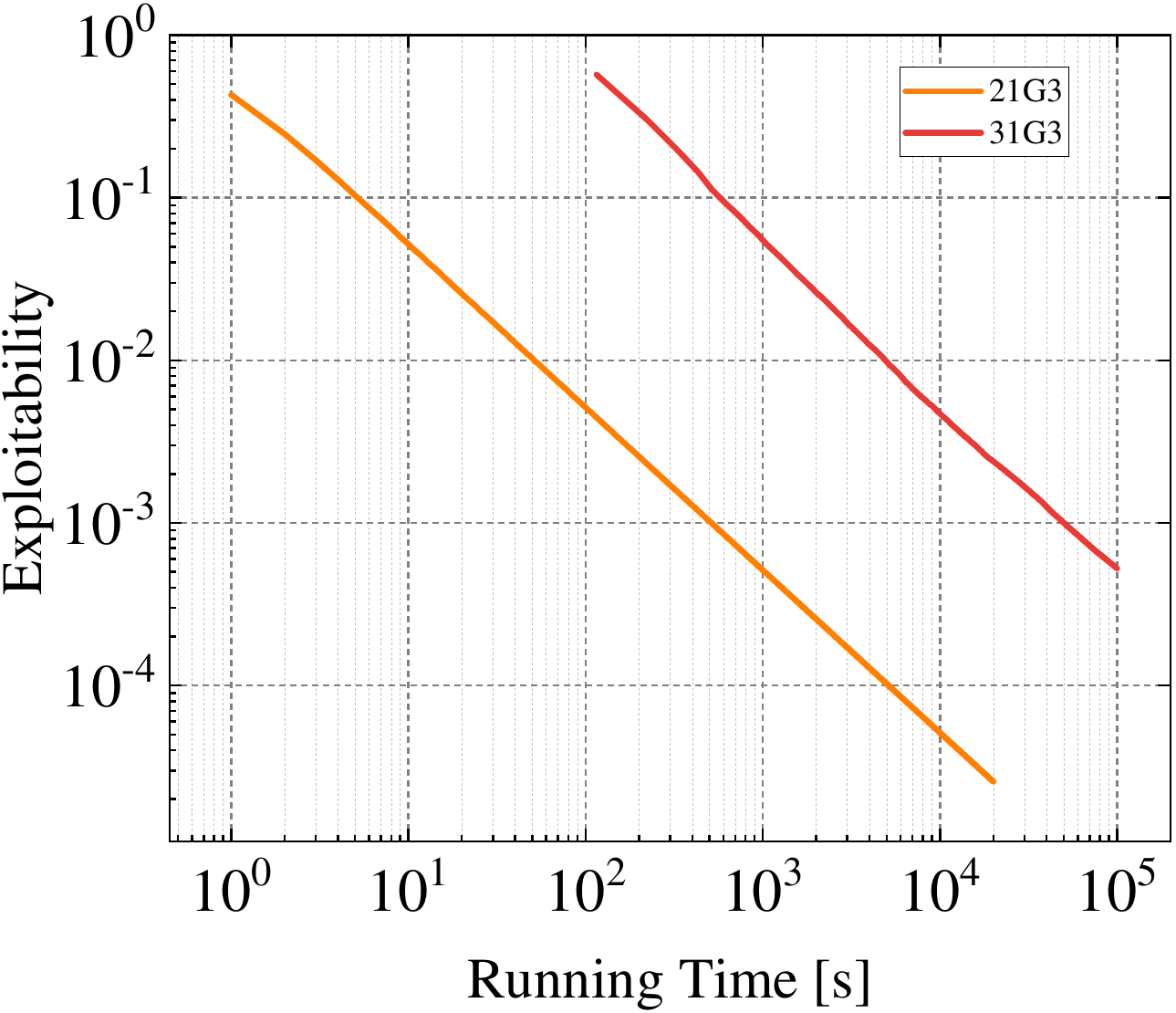}}
\caption{Comparison of exploitability in the same running time. All experiments except 21G run for 100,000 seconds. TPICA fails to work due to out-of-memory in 14\textbf{K}6 and 14\textbf{L}33 and cannot run on Goofspiel due to changes in private information.}  
\label{fig:time_expl}
\end{figure*}
\paragraph{Execution Efficiency.}
The process of finding a TMECor is iterative. To evaluate algorithmic execution efficiency, we conducted comparative experiments on the time taken by the algorithms over the same iteration rounds in four distinct scenarios (i.e., 12\textbf{K}3, 12\textbf{K}4, 12\textbf{K}6, and 12\textbf{L}33). As illustrated in Figure \ref{fig:it_time}, MPTA consistently requires less time to compute approximate equilibrium strategy profiles compared to TPICA. Specifically, Figures \ref{res1_1} and \ref{res1_2} show that MPTA takes 1,740 seconds and 2,293 seconds to complete 20,000 iterations in 12\textbf{K}3 and 12\textbf{K}4, while TPICA takes 5,878 seconds and 89,337 seconds under the same conditions, representing speed improvements of 182.89 and 168.47 times, respectively. Notably, the scales of 12\textbf{K}6 and 12\textbf{L}33 are significantly larger. Figures \ref{res1_3} and \ref{res1_4} show that in these scenarios, MPTA's advantages are even more prominent. In 12\textbf{K}6, MPTA takes 79 seconds for 300 rounds, but TPICA takes 99,600 seconds for 184 rounds. In 12\textbf{L}33, MPTA and TPICA take 3,555 seconds and 1,327,652 seconds for 15,600 iterations, respectively, showing speed improvements of 694.44 and 233.98 times. These results highlight the notable enhancement in the execution efficiency by our method. 
\paragraph{Solving Efficiency.}
A smaller exploitability indicates that the current strategy profile is closer to the TMECor. To compare solving efficiency, we tested the change in exploitability over time within a limited runtime of $10^5$ seconds on seven game instances from \emph{Kuhn poker}, \emph{Leduc poker} and \emph{Goofspiel}, as shown in Figure \ref{fig:time_expl}. Figure \ref{res2_1}, \ref{res2_2}, and \ref{res2_3} show that MPTA consistently outperforms TPICA. This demonstrates our method's higher computational accuracy within the same runtime. As the game scale increases, the gap between the performance of MPTA and TPICA widens, indicating MPTA's superiority in handling large-scale scenarios 14\textbf{K}6 and 14\textbf{L}33, TPICA fails due to out-of-memory, as shown in Figures \ref{res2_4} and \ref{res2_5}. Although MPTA has not yet converged to an approximate equilibrium within the $10^5$ seconds, it can do so with sufficient runtime. Figure \ref{res2_6} shows MPTA's robust performance on \emph{Goofspiel}, which involves changes in players' private information during the game. TPICA, using fixed \emph{prescriptions} to specify an action for each infoset, cannot handle such games, highlighting our approach's high generalizability.

\section{Conclusion and Discussion}\label{conclusion}
In this paper, we propose a multi-player transformation algorithm that establishes a connection between 2p0s games and adversarial team games. Our method restricts the exponential growth of the transformed game action space by utilizing the PIPB structure. It can handle situations where private information changes during the game. Furthermore, we prove the equilibrium equivalence between the original and transformed game, which provides a theoretical guarantee for our work. We conducted 14 experiments on multiple standard testbeds, all of which demonstrated exceptional performance, further showcasing the effectiveness of our method.

Our method may also be applicable in real-world scenarios. For instance, in environmental protection activities, ATG models team members' inability to communicate while protecting the environment in different regions. In competitive games, team members may be unable to communicate during the game due to the rules. In future work, we plan to introduce the idea of equilibrium refinement, using the strategies obtained in a perfect information environment to guide team members in making decisions in ATGs.


\bibliographystyle{plainnat}
\renewcommand{\bibsection}{\subsubsection*{References}}
\newpage

\appendix
\section{Appendix A}
\subsection{The Proof of Theorem 1}
\begin{theorem}
    Given an ATG $G$ with visibility that satisfies the public-turn-taking property, and its transformed game $G^{\prime}=\emph{MPTA}(G)$. The size of any episode in $G^{\prime}$ is $\mathcal{O}\big((\frac{(\lvert \Omega\rvert-1)!}{(\lvert \Omega\rvert-\lvert \mathcal{T}\rvert)!}\lvert A\rvert)^{\lvert \mathcal{T}\rvert}\big)$.
\end{theorem}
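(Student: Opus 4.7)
The plan is a structural counting argument: I would bound the size of an arbitrary episode of $G'$ by characterizing the branching at each layer and multiplying. By Algorithm~\ref{algo:MPTA}, every team member node $h$ in the original game is replaced in $G'$ by a dummy-parent node $h_d$ followed by a coordinator action node. Thus an episode containing $\lvert \mathcal{T}\rvert$ consecutive team plays in $G$ becomes $\lvert \mathcal{T}\rvert$ (dummy, coordinator) pairs in $G'$, of total depth $2\lvert \mathcal{T}\rvert$ (up to the final opponent or terminal node that closes the episode).

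Next, I would bound the branching of each layer. The coordinator layer branches exactly $\lvert \mathcal{A}(h)\rvert = \lvert A\rvert$ by the uniform action-set assumption. For the dummy layer $h_d$ before a team member $i$, Definition~\ref{def:PIPB} and Algorithm~\ref{algo:MPTA} specify $\mathcal{A}'(h_d) = \Omega\setminus\{\omega_i\}$, whose role is to encode a reachable configuration of the other $\lvert \mathcal{T}\rvert - 1$ teammates' private information. Because private-information values are interdependent (distinct within a single deal, as motivated by the card-dealing example), along any root-to-leaf path of the episode the sequence of dummy outcomes together picks out an ordered selection of $\lvert \mathcal{T}\rvert - 1$ distinct private values from the $\lvert \Omega\rvert - 1$ candidates different from $\omega_i$. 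The number of such selections is $\tfrac{(\lvert \Omega\rvert - 1)!}{(\lvert \Omega\rvert - \lvert \mathcal{T}\rvert)!}$, which I would take as the upper bound for the effective branching contributed by each dummy layer. Combined with the coordinator's $\lvert A\rvert$ options, this yields a per-pair bound of $\tfrac{(\lvert \Omega\rvert - 1)!}{(\lvert \Omega\rvert - \lvert \mathcal{T}\rvert)!}\cdot \lvert A\rvert$.

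Multiplying across the $\lvert \mathcal{T}\rvert$ (dummy, coordinator) pairs of the episode then gives a total of at most $\bigl(\tfrac{(\lvert \Omega\rvert - 1)!}{(\lvert \Omega\rvert - \lvert \mathcal{T}\rvert)!}\lvert A\rvert\bigr)^{\lvert \mathcal{T}\rvert}$ nodes, matching the claimed $\mathcal{O}$-bound after absorbing constant-size boundary contributions.

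The main obstacle is justifying the dummy-layer bound cleanly. Algorithm~\ref{algo:MPTA} writes $\mathcal{A}'(h_d)$ as a set of raw size $\lvert \Omega\rvert - 1$, so the step that arrives at the permutation factor $\tfrac{(\lvert \Omega\rvert - 1)!}{(\lvert \Omega\rvert - \lvert \mathcal{T}\rvert)!}$ requires an aggregate accounting rather than a purely local one: one must invoke the public-turn-taking structure (so the coordinator at each level sees the prior dummy outcomes along the path), identify which teammate each dummy encodes so that interdependence properly restricts the set of reachable joint assignments, and verify that the induced per-layer upper bound is indeed $\tfrac{(\lvert \Omega\rvert - 1)!}{(\lvert \Omega\rvert - \lvert \mathcal{T}\rvert)!}$. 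Care is also needed to avoid double-counting the terminal/opponent nodes that close each episode.
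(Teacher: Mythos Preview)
Your overall plan---structural counting by layering dummy and coordinator branches across the $\lvert\mathcal{T}\rvert$ team plays of an episode---is exactly the route the paper takes. The paper differs from you in two respects.

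First, and this addresses the obstacle you correctly isolate: the paper does \emph{not} derive the permutation factor by any aggregate path argument. It simply takes the branching of each dummy node to be $\tfrac{(\lvert\Omega\rvert-1)!}{(\lvert\Omega\rvert-\lvert\mathcal{T}\rvert)!}$ locally, on the grounds that a single dummy action in PIPB encodes a full ordered tuple of the other $\lvert\mathcal{T}\rvert-1$ teammates' private values (the pseudocode line $\mathcal{A}'(h_d)\gets\Omega\setminus\{\omega_{P(h)}\}$ is written for the two-teammate case, where the tuple degenerates to a single value and the two quantities coincide). Your attempted justification is internally inconsistent: you argue that the \emph{sequence} of dummy outcomes along a root-to-leaf path selects $\lvert\mathcal{T}\rvert-1$ distinct values, which would yield a single factor $\tfrac{(\lvert\Omega\rvert-1)!}{(\lvert\Omega\rvert-\lvert\mathcal{T}\rvert)!}$ for the whole episode, yet you then apply that factor at \emph{every} dummy layer. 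Either take the tuple semantics locally, as the paper does, or carry the aggregate argument through consistently---but the latter does not give the stated bound.

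Second, once the per-layer branching is fixed, the paper counts all nodes (not just leaves) by writing the episode size as
\[
\sum_{n=1}^{\lvert\mathcal{T}\rvert}\Bigl(\tfrac{(\lvert\Omega\rvert-1)!}{(\lvert\Omega\rvert-\lvert\mathcal{T}\rvert)!}\Bigr)^{n}\bigl(\lvert A\rvert^{n-1}+\lvert A\rvert^{n}\bigr),
\]
splits this into two geometric series, and sums to a closed form whose dominant term is $\bigl(\tfrac{(\lvert\Omega\rvert-1)!}{(\lvert\Omega\rvert-\lvert\mathcal{T}\rvert)!}\lvert A\rvert\bigr)^{\lvert\mathcal{T}\rvert}$. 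Your leaf-count product reaches the same $\mathcal{O}$ bound more directly, but you should state explicitly that the total node count is within a constant of the leaf count (branching factor at least two), rather than equating the product with ``nodes'' outright.
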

\begin{proof}
    We assume that each player has $\lvert A\rvert$ available actions at every state in the original game.
    During the traversal of the original game tree, the dummy player nodes will provide all possible private information from the teammates. Therefore, the number of available actions at the dummy player nodes is $\frac{(\lvert \Omega\rvert-1)!}{(\lvert \Omega\rvert-\lvert \mathcal{T}\rvert)!}$. The coordinator inherits the team members' actions since they are \emph{public}. When the team in the game consists of two players, the size of any episode in $G^{\prime}$ is given by:
    \begin{equation*}
    \begin{aligned}
        &\frac{(\lvert \Omega\rvert-1)!}{(\lvert \Omega\rvert-\lvert \mathcal{T}\rvert)!} +\frac{(\lvert \Omega\rvert-1)!}{(\lvert \Omega\rvert-\lvert \mathcal{T}\rvert)!}\lvert A\rvert+(\frac{(\lvert \Omega\rvert-1)!}{(\lvert \Omega\rvert-\lvert \mathcal{T}\rvert)!})^2\lvert A\rvert \\ &+ (\frac{(\lvert \Omega\rvert-1)!}{(\lvert \Omega\rvert-\lvert \mathcal{T}\rvert)!})^2\lvert A\rvert^{2}
    \end{aligned}
    \end{equation*}
    When the team in the game consists of three players, the size of any episode in $G^{\prime}$ is given by:
    \begin{equation*}
    \begin{aligned}
        &\frac{(\lvert \Omega\rvert-1)!}{(\lvert \Omega\rvert-\lvert \mathcal{T}\rvert)!} +\frac{(\lvert \Omega\rvert-1)!}{(\lvert \Omega\rvert-\lvert \mathcal{T}\rvert)!}\lvert A\rvert+(\frac{(\lvert \Omega\rvert-1)!}{(\lvert \Omega\rvert-\lvert \mathcal{T}\rvert)!})^{2}\lvert A\rvert \\ &+ (\frac{(\lvert \Omega\rvert-1)!}{(\lvert \Omega\rvert-\lvert \mathcal{T}\rvert)!})^2\lvert A\rvert^{2}+(\frac{(\lvert \Omega\rvert-1)!}{(\lvert \Omega\rvert-\lvert \mathcal{T}\rvert)!})^{3}\lvert A\rvert^{2} \\ &+ (\frac{(\lvert \Omega\rvert-1)!}{(\lvert \Omega\rvert-\lvert \mathcal{T}\rvert)!})^{3}\lvert A\rvert^{3}
    \end{aligned}
    \end{equation*}
    Thus, extending to the general case where the team consists of $\lvert \mathcal{T}\rvert$ players, the size of any episode in $G^{\prime}$ is given by:
    \begin{equation*}
        \sum_{n=1}^{\lvert \mathcal{T}\rvert}\big(\frac{(\lvert \Omega\rvert-1)!}{(\lvert \Omega\rvert-\lvert \mathcal{T}\rvert)!}\big)^{n}(\lvert A\rvert^{n-1}+\lvert A\rvert^{n}).
    \end{equation*}
    Let $S_{1}=\sum_{n=1}^{\lvert \mathcal{T}\rvert}(\frac{(\lvert \Omega\rvert-1)!}{(\lvert \Omega\rvert-\lvert \mathcal{T}\rvert)!})^{n}\lvert A\rvert^{n-1}$, and $S_{2}=\sum_{n=1}^{\lvert \mathcal{T}\rvert}(\frac{(\lvert \Omega\rvert-1)!}{(\lvert \Omega\rvert-\lvert \mathcal{T}\rvert)!})^{n}\lvert A\rvert^{n}$. Then, we have:
    \begin{equation*}
        \sum_{n=1}^{\lvert \mathcal{T}\rvert}\big(\frac{(\lvert \Omega\rvert-1)!}{(\lvert \Omega\rvert-\lvert \mathcal{T}\rvert)!}\big)^{n}(\lvert A\rvert^{n-1}+\lvert A\rvert^{n})=S_{1}+S_{2}.
    \end{equation*}
    First, consider $S_{1}$:
    \begin{equation*}
    \begin{aligned}
        S_{1} = &\frac{(\lvert \Omega\rvert-1)!}{(\lvert \Omega\rvert-\lvert \mathcal{T}\rvert)!}+\big(\frac{(\lvert \Omega\rvert-1)!}{(\lvert \Omega\rvert-\lvert \mathcal{T}\rvert)!}\big)^{2}\lvert A\rvert+\dots \\ &+\big(\frac{(\lvert \Omega\rvert-1)!}{(\lvert \Omega\rvert-\lvert \mathcal{T}\rvert)!}\big)^{\lvert \mathcal{T}\rvert}\lvert A\rvert^{\lvert \mathcal{T}\rvert-1}.
        \end{aligned}
    \end{equation*}
    $S_{1}$ meets the criteria for a finite geometric series. Using the geometric series sum formula, we have:
    \begin{equation*}
        S_{1}=\frac{(\lvert A\rvert\frac{(\lvert \Omega\rvert-1)!}{(\lvert \Omega\rvert-\lvert \mathcal{T}\rvert)!})^{\lvert \mathcal{T}\rvert}-1}{\lvert A\rvert\frac{(\lvert \Omega\rvert-1)!}{(\lvert \Omega\rvert-\lvert \mathcal{T}\rvert)!}-1}\frac{(\lvert \Omega\rvert-1)!}{(\lvert \Omega\rvert-\lvert \mathcal{T}\rvert)!}
    \end{equation*}
    Similarly, consider $S_{2}$:
    \begin{equation*}
    \begin{aligned}
        S_{2}= &\frac{(\lvert \Omega\rvert-1)!}{(\lvert \Omega\rvert-\lvert \mathcal{T}\rvert)!}\lvert A\rvert + \big(\frac{(\lvert \Omega\rvert-1)!}{(\lvert \Omega\rvert-\lvert \mathcal{T}\rvert)!}\big)^{2}\lvert A\rvert^{2}+\dots \\ & +\big(\frac{(\lvert \Omega\rvert-1)!}{(\lvert \Omega\rvert-\lvert \mathcal{T}\rvert)!}\big)^{\lvert \mathcal{T}\rvert} +\lvert A\rvert^{\lvert \mathcal{T}\rvert}.
        \end{aligned}
    \end{equation*}
    Thus, 
    \begin{equation*}
        S_{2}=\frac{(\lvert A\rvert\frac{(\lvert \Omega\rvert-1)!}{(\lvert \Omega\rvert-\lvert \mathcal{T}\rvert)!})^{\lvert \mathcal{T}\rvert}-1}{\lvert A\rvert\frac{(\lvert \Omega\rvert-1)!}{(\lvert \Omega\rvert-\lvert \mathcal{T}\rvert)!}-1} \frac{(\lvert \Omega\rvert-1)!}{(\lvert \Omega\rvert-\lvert \mathcal{T}\rvert)!}\lvert A\rvert.
    \end{equation*}
    Adding $S_{1}$ and $S_{2}$, we have:
    \begin{equation*}
        S_{1}+S_{2}=(1+\lvert A\rvert)\frac{(\lvert \Omega\rvert-1)!}{(\lvert \Omega\rvert-\lvert \mathcal{T}\rvert)!}\frac{(\lvert A\rvert\frac{(\lvert \Omega\rvert-1)!}{(\lvert \Omega\rvert-\lvert \mathcal{T}\rvert)!})^{\lvert \mathcal{T}\rvert}-1}{\lvert A\rvert\frac{(\lvert \Omega\rvert-1)!}{(\lvert \Omega\rvert-\lvert \mathcal{T}\rvert)!}-1}.
    \end{equation*}
    Therefore, the size of any episode in $G^{\prime}$ is $\mathcal{O}\big((\frac{(\lvert \Omega\rvert-1)!}{(\lvert \Omega\rvert-\lvert \mathcal{T}\rvert)!}\lvert A\rvert)^{\lvert \mathcal{T}\rvert}\big)$.
    
    This concludes the proof.
\end{proof}

Let the game transformed by TPICA be denoted as $\hat{G}$. In $\hat{G}$, each team member node, except for the team member node who first acts, corresponds to an additional dummy player node. We recognize that, according to the \emph{prescription} property in TPICA, every coordinator node except for the first one will be matched with a dummy player node. By applying a derivation process similar to Theorem 1, the size of any episode in $\hat{G}$ is $2\sum_{n=1}^{\lvert \mathcal{T}\rvert}(\lvert A\rvert^{\lvert \Omega\rvert})^{n}$. Using the geometric series sum formula, the above equation is equal to $2\lvert A\rvert^{\lvert \Omega\rvert}\frac{(\lvert A\rvert^{\lvert \Omega\rvert})^{\lvert \mathcal{T}\rvert}-1}{\lvert A\rvert^{\lvert \Omega\rvert}-1}$ (i.e., $\mathcal{O}\big((\lvert A\rvert^{\lvert \Omega\rvert})^{\lvert \mathcal{T}\rvert}\big)$). Then, we compare the bases of the two results (i.e., $\frac{(\lvert \Omega\rvert-1)!}{(\lvert \Omega\rvert-\lvert \mathcal{T}\rvert)!}\lvert A\rvert$ and $\lvert A\rvert^{\lvert \Omega\rvert}$). Clearly, the exponential growth rate of the latter far exceeds the polynomial growth rate of the former.

\subsection{The Proof of Lemma 1}
\begin{lemma}
    Given an ATG $G$ with visibility that satisfies the public-turn-taking property, and the transformed game $G^{\prime}=\emph{MPTA}(G)$. For any joint reduced pure strategy $\pi_{\mathcal{T}}$ in $G$, it can be mapped to a corresponding strategy $\pi_{t}$ in $G^{\prime}$, and vice versa.
\end{lemma}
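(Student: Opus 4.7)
The plan is to exhibit an explicit bijection $\phi$ between the coordinator infosets of $G'$ and the team-member infosets of $G$, use it to pair pure strategies in both directions, and then check that the pairing preserves the induced play tree-by-tree. First, I would read $\phi$ off directly from Algorithm~\ref{algo:MPTA}. Whenever $P(h) = i \in \mathcal{T}$ in $G$, MPTA inserts a PIPB: a dummy parent $h_d$ with actions in $\Omega \setminus \{\omega_i\}$, whose children $h_d a'$ are coordinator nodes carrying the action set $\mathcal{A}(h)$. Coordinator nodes sharing a dummy parent are placed in a common infoset, and the remaining refinement of coordinator infosets is by publicly observable actions. Hence a coordinator infoset $I'$ is uniquely characterized by (i) the public history up to the replaced team decision and (ii) the own private information $\omega_i$ of the acting team member. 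Because $G$ is completely inflated and public-turn-taking for the team, the same pair also uniquely characterizes an infoset $I \in \mathcal{I}_i$ in $G$ with $\mathcal{A}(I) = \mathcal{A}(I')$. Taking $\phi(I')$ to be this team infoset gives the desired bijection.

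Using $\phi$, I would define the strategy correspondence as follows. Given a joint reduced pure strategy $\pi_{\mathcal{T}} = (\pi_i)_{i\in\mathcal{T}}$, set $\pi_t(I') := \pi_i(\phi(I'))$ at every reachable coordinator infoset, where $i$ is the acting team member at $\phi(I')$; conversely, given $\pi_t$, set $\pi_i(I) := \pi_t(\phi^{-1}(I))$ at every reachable team infoset. The reduction to reduced pure strategies commutes with $\phi$: an infoset becomes unreachable in $G'$ exactly when its $\phi$-image becomes unreachable in $G$, because the subsequent chain of chance, opponent, and team decisions on corresponding paths coincides under paired strategies, while the on-path dummy branch for any single play is forced to match the teammates' chance-dealt private information.

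Finally, I would certify that this pairing is the ``correspondence'' that the downstream realization-equivalence arguments require. Every terminal $z \in Z$ in $G$ lifts uniquely to a terminal $z' \in Z'$ by interleaving the appropriate dummy actions before each team move, and the on-path dummy action is determined by chance; under the paired strategies, the action played at each coordinator node on the path to $z'$ equals the action played at the corresponding team node on the path to $z$. Consequently, for any fixed strategy of the opponent and any chance realization, the team reaches $z$ in $G$ iff the coordinator reaches $z'$ in $G'$, which establishes realization-equivalence at the pure-strategy level and, by linear extension, for mixed strategies as stated after the lemma.

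The main obstacle is arguing rigorously that the coordinator's infoset partition in $G'$ carries exactly the information of each team member in $G$, no more and no less. On the ``no more'' side, the single-infoset grouping of coordinator children of a shared dummy must prevent teammates' private information from leaking to the coordinator; on the ``no less'' side, the refinement by publicly observable actions must separate infosets whose images in $G$ are distinct. Both properties rest on the completely-inflated and public-turn-taking assumptions, and careful bookkeeping of perfect recall across the PIPB insertions is the subtle step that makes $\phi$ a genuine bijection between the team's joint pure plans and the coordinator's pure plans.
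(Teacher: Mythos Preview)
Your proposal is correct and follows essentially the same approach as the paper: both arguments identify each coordinator infoset in $G'$ with the unique team-member infoset in $G$ sharing the same public history and acting player's private information, and then transport pure plans through this identification in both directions. The paper phrases this as a depth-first recursive traversal that case-splits on node type (chance, opponent, team, terminal) and checks at each team node that $\pi_{\mathcal{T}}[I(h)] = \pi_t[I']$, whereas you make the infoset bijection $\phi$ explicit up front and then define the strategy maps pointwise; your version is more detailed (and already packages the realization-equivalence argument the paper defers to Theorem~\ref{theorem:payoff}), but the underlying mechanism is the same.
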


\begin{proof}
    We can prove Lemma~\ref{lemma:strategy} by recursively traversing $G$ and $G^{\prime}$ in a depth-first pre-order manner.
    \paragraph{Case 1: from $\pi_{\mathcal{T}}$ to $\pi_{t}$.} Let $h$ and $h^{\prime}$ denote the current nodes reached in $G$ and $G^{\prime}$, respectively. $G$ satisfies the public-turn-taking, ensuring $h$ and $h^{\prime}$ either represent the same player or are both terminal nodes. Initializing $h$ with the chance player node. $\Omega$ is a set of all private information in $G$. When constructing $\pi_{t}$:

    \begin{itemize}
        \item[1)] For the chance node: $\pi_{c}=\pi_{c^{\prime}}$ always holds as our algorithm does not modify the chance node. This ensures that the actions specified by the chance player's strategy in the original game are the same as those in the transformed game.
        
        \item[2)] For opponent nodes: The proof is identical to 1).
        
        \item[3)] For team member nodes: Let $\pi_{\mathcal{T}}[I(h)]$ represent the joint reduced pure strategy of $\pi_{\mathcal{T}}$ at infoset $I(h)$. During the traversal, our algorithm expands $h$ based on all the possibly private information of teammates into $\lvert \Omega\rvert-1$ nodes. These nodes belong to the same infoset, denoted as $I^{\prime}$. Let $\pi_{t}[I^{\prime}]$ represent the reduced pure strategy of $\pi_{t}$ at $I^{\prime}$, where $I^{\prime}\in S_{t}(h)$. When $t$ is in a public state, $\pi_{\mathcal{T}}[I(h)]=\pi_{t}(I^{\prime})$.

        \item[4)] For terminal nodes: When reaching the terminal node through the above process, our algorithm ensures the following holds: $u_t^{\prime}=\sum_{i\in \mathcal{T}}u_i(h)$ and $u_{o}^{\prime}(h^{\prime}) =u_o(h)=-\sum_{i\in \mathcal{T}}u_i(h)$.
    \end{itemize}
    \paragraph{Case 2: from $\pi_{t}$ to $\pi_{\mathcal{T}}$.} The proof follows the same points as the previous case.
\end{proof}

\subsection{The Proof of Theorem 2}
\begin{theorem}
    Given a public-turn-taking ATG $G$ with visibility, and its transformed game $G^{\prime}=\emph{MPTA}(G)$, they have equivalent payoffs.
\end{theorem}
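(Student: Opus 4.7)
The plan is to reduce payoff equivalence to reach-probability equivalence across all terminal nodes. I would fix an arbitrary joint strategy profile $(\mu_{\mathcal{T}}, \mu_o)$ in $G$, invoke Lemma~\ref{lemma:strategy} together with the observation that MPTA leaves chance and opponent nodes untouched (so $\mu_c$ and $\mu_o$ are carried over unchanged to $G^{\prime}$), and obtain a corresponding coordinator profile $(\mu_t, \mu_o)$ in $G^{\prime}$. It then suffices to prove, for every pair of strategy profiles related by this mapping, that $u_{\mathcal{T}}(\mu_{\mathcal{T}}, \mu_o) = u_t(\mu_t, \mu_o)$ in expectation, and (by the zero-sum property built into the construction via $u_o^{\prime}(z^{\prime}) = -u_t(z^{\prime}) = -\sum_{i \in \mathcal{T}} u_i(h)$) that opponent payoffs agree as well.

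First I would set up a bijection between terminals of $G$ and terminals of $G^{\prime}$. Inspecting Algorithm~\ref{algo:MPTA}, each root-to-terminal path in $G$ is mirrored by exactly one root-to-terminal path in $G^{\prime}$, up to the insertion of dummy player nodes $h_d$ before each team member node, and on that mirrored path the algorithm sets $u_t(z^{\prime}) = \sum_{i \in \mathcal{T}} u_i(z)$. Thus the per-terminal utility identity holds by construction, and only reach probabilities remain to be checked.

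Next I would verify reach-probability preservation along this bijection. By Lemma~\ref{lemma:strategy}, at every public coordinator infoset $I^{\prime} \in \mathcal{S}_t(h)$ the coordinator plays the same action that $\pi_{\mathcal{T}}$ prescribes at $I(h)$, while the chance and opponent contributions to reach probability are identical in the two games since their nodes and strategies are unchanged. The only structural difference is the extra dummy branching with $|\Omega|-1$ children. The intended semantics, consistent with the PIPB construction of Definition~\ref{def:PIPB}, is that exactly the dummy branch $a^{\prime} = \omega$ matching the teammate's actual private information dealt by chance is on-path to $z^{\prime}$; the other $|\Omega|-2$ dummy branches lead to disjoint subtrees corresponding to counterfactual private information and therefore do not contribute mass to $z^{\prime}$. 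This is why the dummy actions are declared \emph{unobservable} to all but the immediately subsequent team player: they do not inject any additional randomness nor any new strategic choice, so the reach probability of $z^{\prime}$ under $(\mu_t, \mu_o)$ in $G^{\prime}$ equals the reach probability of $z$ under $(\mu_{\mathcal{T}}, \mu_o)$ in $G$. Summing $\mu[z]\, u(z)$ over all terminals then yields $u_{\mathcal{T}}(\mu_{\mathcal{T}}, \mu_o) = u_t(\mu_t, \mu_o)$, proving payoff equivalence.

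The main obstacle I expect is making the dummy-player bookkeeping airtight: one must argue precisely that the PIPB construction is realization-faithful, i.e., that enumerating teammates' possible private information as dummy actions never double-counts or reroutes probability mass. I would formalize this by a simple induction on the depth of the recursion in \emph{ProcOfTrans}, carrying the invariant that, for every prefix history $h$ in $G$ and its image $h^{\prime}$ in $G^{\prime}$, the product of chance, opponent, and coordinator action probabilities along the corresponding prefixes coincide whenever the dummy branches taken along $h^{\prime}$ are consistent with the private information dealt along $h$, and are zero otherwise. The base case (chance root) is immediate, the chance and opponent inductive cases follow from $\pi_c = \pi_{c^{\prime}}$ and $\pi_o = \pi_{o^{\prime}}$, and the team member case follows from the equality $\pi_{\mathcal{T}}[I(h)] = \pi_t[I^{\prime}]$ together with the PIPB consistency just described. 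Terminating this induction at the terminal case gives the required per-terminal equality of reach probabilities, completing the proof.
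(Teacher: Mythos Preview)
Your proposal is correct and follows essentially the same approach as the paper, which also derives payoff equivalence directly from the strategy correspondence of Lemma~\ref{lemma:strategy}. The paper's own proof is extremely terse (it simply invokes Lemma~\ref{lemma:strategy} and asserts that corresponding strategies yield the same payoff in both directions), whereas you spell out the terminal-level reach-probability argument and the dummy-player bookkeeping that the paper leaves implicit.
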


\begin{proof}
    The proof directly relies on Lemma~\ref{lemma:strategy}. Specifically, for any strategy $\pi_{\mathcal{T}}$ in $G$, we can find a corresponding strategy $\pi_{t}$ in $G^{\prime}$ that yields the same payoff. Similarly, for any strategy $\pi_{t}$ in $G^{\prime}$, there exists a payoff-equivalent strategy $\pi_{\mathcal{T}}$ in $G$. This ensures that the payoff for the players remains unchanged whether they choose $\pi_{t}$ in $G^{\prime}$ or $\pi_{\mathcal{T}}$ in $G$.
\end{proof}

\subsection{The Proof of Theorem 3}
\begin{theorem}
    Given an ATG $G$ with visibility that satisfies the public-turn-taking property, and its transformed game $G^{\prime}=\emph{MPTA}(G)$. If $\mu_{t}^{*}$ is an NE in $G^{\prime}$, then strategy $\mu_{T}^{*}: \mu_{t}^{*}\mapsto \mu_{\mathcal{T}}^{*}$ is a TMECor in $G$.
\end{theorem}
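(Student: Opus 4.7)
The plan is to combine the strategy mapping guaranteed by Lemma~\ref{lemma:strategy} with the payoff equivalence of Theorem~\ref{theorem:payoff} in order to show that the max-min program defining TMECor in $G$ is the same, up to a relabeling of strategies, as the max-min program solved by an NE in the 2p0s game $G'$. First I would observe that MPTA leaves opponent nodes (and chance nodes) untouched, so $\mu_o$ lives in the same space in $G$ and $G'$, and any opponent strategy can be trivially identified in either game. Combining this with the extension of Lemma~\ref{lemma:strategy} to mixed strategies (via $\sum_{\pi_{\mathcal{T}}:\pi_t\mapsto\pi_{\mathcal{T}}}\mu_{\mathcal{T}}(\pi_{\mathcal{T}})$, already stated in the excerpt), I obtain a bijection between $\Delta(\times_{i\in\mathcal{T}}\Pi_i)$ and $\Delta(\Pi_t)$.

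Next I would invoke Theorem~\ref{theorem:payoff} to assert that, for every pair $(\mu_{\mathcal{T}},\mu_o)$ and its image $(\mu_t,\mu_o)$ under this bijection, the team's expected payoff is preserved: $\sum_{z\in Z}\mu_{\mathcal{T}}[z]\mu_o[z]u_{\mathcal{T}}(z) = u_t(\mu_t,\mu_o)$. This identity means the objective of the TMECor bi-level program in Equation~(\ref{equ:TMECor}) coincides with the zero-sum objective in $G'$ on corresponding arguments. Since bijections commute with $\max$ and $\min$ over the relevant strategy spaces, the two problems $\max_{\mu_{\mathcal{T}}}\min_{\mu_o}u_{\mathcal{T}}(\mu_{\mathcal{T}},\mu_o)$ and $\max_{\mu_t}\min_{\mu_o}u_t(\mu_t,\mu_o)$ have the same optimal value and corresponding optimizers.

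To close the argument, I would take an NE $(\mu_t^{*},\mu_o^{*})$ of $G'$ and its image $\mu_{\mathcal{T}}^{*}$. Because $G'$ is a finite 2p0s game, the minimax theorem gives $u_t(\mu_t^{*},\mu_o^{*})=\max_{\mu_t}\min_{\mu_o}u_t$. Arguing by contradiction, if some $\mu_{\mathcal{T}}^{\dagger}$ in $G$ produced a strictly higher worst-case value than $\mu_{\mathcal{T}}^{*}$, mapping it through Lemma~\ref{lemma:strategy} would produce a coordinator strategy $\mu_t^{\dagger}$ with $\min_{\mu_o} u_t(\mu_t^{\dagger},\mu_o) > \min_{\mu_o}u_t(\mu_t^{*},\mu_o)$, contradicting the NE property of $\mu_t^{*}$. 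A symmetric argument handles a profitable deviation for $\mu_o^{*}$. Hence $(\mu_{\mathcal{T}}^{*},\mu_o^{*})$ is a TMECor.

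The main obstacle I anticipate is ensuring that the strategy mapping is \emph{realization-equivalent pointwise across all opponent profiles}, not merely at equilibrium; that is, $\mu_{\mathcal{T}}[z]=\mu_t[z]$ must hold for every terminal $z$ irrespective of the opponent's choice, since deviations in the contradiction step involve off-equilibrium responses. The excerpt explicitly defines realization-equivalence in this stronger pointwise sense and Lemma~\ref{lemma:strategy} is framed accordingly, so once its proof is established the extension of Theorem~\ref{theorem:payoff} to arbitrary opponent responses is immediate, and the reduction above completes the theorem.
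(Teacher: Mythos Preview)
Your proposal is correct and follows essentially the same approach as the paper: invoke Lemma~\ref{lemma:strategy} to pass between team and coordinator strategies, use Theorem~\ref{theorem:payoff} to equate the inner-minimization values, and then argue by contradiction that a strictly better $\mu_{\mathcal{T}}^{\dagger}$ would map to a coordinator strategy beating $\mu_t^{*}$ against the worst-case opponent, contradicting the NE property. The paper's proof is terser---it omits your remarks on the minimax theorem, the opponent-side symmetry, and the bijection framing---but the logical skeleton is identical.
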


\begin{proof}
    For brevity, let $u_{t}$ and $u_{\mathcal{T}}$ represent $u_{t}(\pi_{c},\pi_{o},\pi_{t})$ and $u_{\mathcal{T}}(\pi_c, \pi_o,\pi_{\mathcal{T}})$, respectively. If $\mu_{t}^{*}$ is an NE in $G^{\prime}$, then by the definition of NE, the following holds:
    \begin{equation*}
        \mu_{t}^{*} \in \arg\max_{\mu_{t}} \min_{\mu_{o}} \sum_{\substack{\pi_c \in \Pi_c \\ \pi_o\in \Pi_o\\ \pi_t \in \Pi_t}}\mu_{c}(\pi_{c})\mu_{o}(\pi_{o})\mu_{t}(\pi_{t})u_{t}.
    \end{equation*}
    If $\mu_{\mathcal{T}}^{*}$ is a TMECor, it satisfies:
    \begin{equation*}
        \mu_{\mathcal{T}}^{*} \in \arg\max_{\mu_{\mathcal{T}}}\min_{\mu_{o}}\sum_{\substack{\pi_{c}\in \Pi_{c} \\ \pi_{o}\in \Pi_{o} \\ \pi_{\mathcal{T} \in \Pi_{\mathcal{T} }}}} \mu_{c}(\pi_{c}) \mu_{o}(\pi_{o}) \mu_{\mathcal{T}}(\pi_{\mathcal{T}}) u_{\mathcal{T}}.
    \end{equation*}

\begin{figure*}[t]
\centering
\includegraphics[width=1\textwidth]{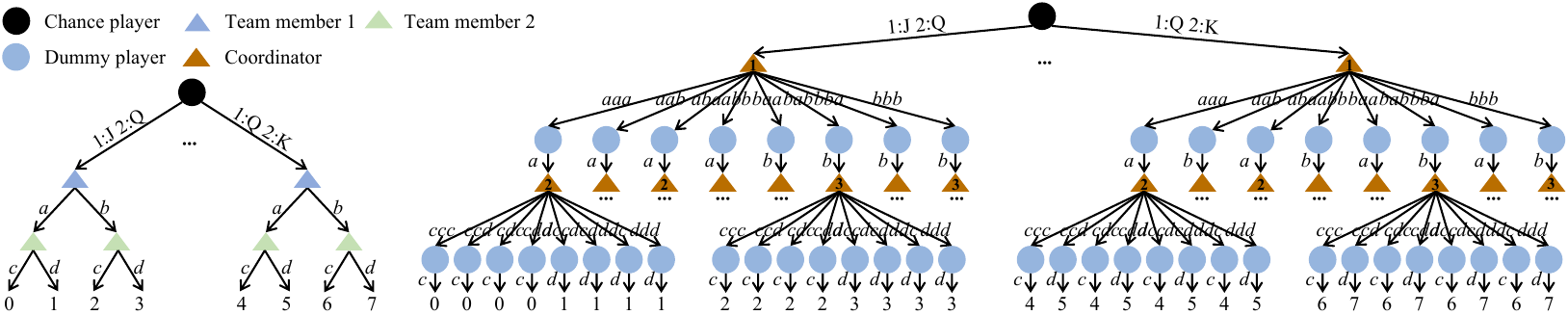}
\caption{Example of game transformation. ``\textbf{\dots}'' indicates omitted branches. The nodes of a player with the same number are in the same infoset. \textbf{Left:} Original ATG omitting the opponent. \textbf{Right:} Result of transforming the game on the left using TPICA.}
\label{fig:TransformedTPICA}
\end{figure*}
    
    Let $\min_{TMECor}(\mu_{\mathcal{T}})$ and $\min_{NE}(\mu_t)$ denote the inner minimization problems for TMECor and NE, respectively.

    Assume there exists a strategy $\mu_{\mathcal{T}}^{\prime}$ such that its value under the definition of TMECor is greater than that of $\mu_{\mathcal{T}}^{*}$. That is, $\min_{TMECor}(\mu_{\mathcal{T}}^{\prime}) > \min_{TMECor}(\mu_{\mathcal{T}}^{*})$.
    According to Lemma~\ref{lemma:strategy}, there exists a strategy $\mu_{t}^{\prime}$ such that $\mu_{\mathcal{T}}^{\prime}\mapsto \mu_{t}^{\prime}$. From Theorem~\ref{theorem:payoff}, we then have:
    \begin{equation*}
        \min_{TMECor}(\mu_{\mathcal{T}}^{\prime})=\min_{NE}(\mu_{t}^{\prime}) > \min_{NE}(\mu_{t}^{*}).
    \end{equation*}
    This results in a contradiction, as it implies that $\mu_{t}^{*}$ is not an NE in $G^{\prime}$. Therefore, $\mu_{\mathcal{T}}^{*}$ must be a TMECor in $G$.
\end{proof}

\section{Appendix B}
\subsection{Converted Result by TPICA}
Figure~\ref{fig:TransformedTPICA} is the result of the TPICA transformation. Due to the large number of nodes, we only display a portion of the transformed game tree.

\subsection{Conceptual Explanation of Team-Public-Information Representation}
The left side of Figure~\ref{fig:TransformedTPICA} shows the original ATG tree, ignoring the opponent nodes, with all nodes forming the set $H$. Since every player can only observe the cards dealt to himself by the chance player, the actions of the chance player are \emph{private} to the coordinator. The actions of all other players are observable, so their actions are \emph{public} to the coordinator. In the left side of Figure~\ref{fig:TransformedTPICA}, the two nodes belonging to team member 1 are in two infosets due to the different private information at each node. For team member 2, the actions observed under the same hand are different, making each of the four nodes belonging to team member 2 a separate information set as well. The set of private information in this game is $\Omega={J,Q,K}$. All leaf nodes form the set of terminal nodes $Z$.

The right side of Figure~\ref{fig:TransformedTPICA} shows the result of the game transformation using TPICA. The coordinator represents a team consisting of team members 1 and 2. The coordinator's public state is divided only by actions called \emph{public}. Therefore, the two nodes of team member 1 in the original game tree belong to the same public state. Since we omitted the branches of the chance nodes, marked by ``\textbf{\dots}'', the actual possible deals are: $\left[1:J, 2:Q\right]$, $\left[1:J, 2:K\right]$, $\left[1:Q, 2:J\right]$, $\left[1:Q, 2:K\right]$, $\left[1:K, 2:J\right]$, and $\left[1:K,2:Q\right]$. Nodes with the same private information are in the same infoset. That is, there are three infosets for team member 1 at this level, each consisting of two nodes. According to the concept of \emph{prescription}, every \emph{prescription} should select an action from the three infosets to form recommendations. Thus, there are $2^3$ \emph{prescriptions}, i.e., $aaa$, $aab$, $aba$, $abb$, $baa$, $bab$, $bba$, and $bbb$. The dummy player will select a specific action from these recommendations as the coordinator's available action. For instance, in the left subtree, the dummy player chooses the first action from each \emph{prescription}; in the right subtree, the dummy player chooses the last action from each \emph{prescription}. The same process applies to the coordinator when traversing to team member 2.

\section{Appendix C}
\subsection{Game Instances}
In our work, the number of players in each game scenario is parameterized for flexibility. To articulate the rules of each instance clearly, we will illustrate using the 3-player version as an example.

\begin{itemize}
    \item \textbf{The rule of \emph{Kuhn poker}:}
    In 3-player Kuhn poker, there are three players and $k$ possible cards. Players take turns acting in sequence. Before the game starts, each player pays one chip to the pot and is dealt a private card. The game proceeds with the following steps:
\begin{itemize}
    \item [1)] 
    Player 1 can choose to check or bet. If checking, the betting round continues with step 2); otherwise, the betting round proceeds to step 3).

    \item [2)]
    Player 2 can choose to check or bet. It is important to note that if Player 2 chooses to bet, then Player 1 must decide between folding or calling after Player 3's action. If Player 2 also chooses to check, the betting round continues with step 4).

    \item [3)]
    Player 2 can choose to fold or call.

    \item [4)]
    Player 3 chooses to check or bet. When Player 3 checks, the betting round ends; otherwise, Player 1 and Player 2 must decide between folding or calling.

    \item [5)]
    Player 3 chooses to fold or call. The betting round concludes after her decision.
\end{itemize}

We assume that Player 1 is the adversary, while Player 2 and Player 3 are team members. In the event of the opponent's victory, Players 2 and 3 share the loss. If the team wins, Player 2 and Player 3 share the team's rewards. The $n$-player Kuhn poker adopted in our work is an extended version based on the 3-player Kuhn poker.
    \item \textbf{The rule of \emph{Leduc poker}:} 
    In the 3-player version of adversarial team Leduc poker, the deck contains three suits and $k \geq 3$ card ranks. Each player starts by contributing one chip to the pot and receiving a private card. There are two betting rounds in total. After the first betting round, the community card is revealed. Then, players who have not folded proceed to the second betting round. After the conclusion of the second betting round, players remaining in the game will reveal their private cards. If a player pairs her card with the community card, she will win the pot.

If a player's single private card forms a pair with the community card, she will win the pot. Otherwise, the player with the highest private card wins. We assume that Player 1 is the opponent, while Player 2 and Player 3 are team members. In the adversarial team games, there are some modifications to the payoff structure. If Player 1 wins, she takes all the chips from the pot. If Player 2 or Player 3 wins, the chips contributed by the team members are returned to them, and the chips bet by Player 1 are evenly distributed among each team member.

    \item \textbf{The rule of \emph{Goofspiel}:}
\emph{Goofspiel} is a bidding game.  We adopt a variant version with three cards. Every player has a hand of cards with values $\{1, 2, 3\}$. A third stack of cards, also with values $\{1, 2, 3\}$, is shuffled and placed on the table. At the onset of each round, a neutral referee places a card on the table as the reward for that round. Players bid by selecting a card from their hand, and the player with the highest bid claims the reward. In the event of a tie, the reward is equitably shared among the tying players. After three rounds, all rewards are distributed among the players, contributing to their scores. Assuming that Player 2 and Player 3 form a team, the final score of each team member is calculated by summing and averaging the rewards obtained by Player 2 and Player 3.
\end{itemize}
\end{document}